\renewcommand{\epsilon}{\varepsilon}
\begin{document}

\title{On Sets of Words of Rank Two}
\toctitle{On Sets of Words of Rank Two}

\author{
Giuseppa Castiglione\and Gabriele Fici\and Antonio Restivo
}
\authorrunning{G. Castiglione, G. Fici and A. Restivo}
\tocauthor{
Giuseppa Castiglione (Universit\`a di Palermo),
Gabriele Fici (Universit\`a di Palermo),
Antonio Restivo (Universit\`a di Palermo)
}

\institute{
Dipartimento di Matematica e Informatica, Universit\`a di Palermo\\Via Archirafi 34, Palermo, Italy\\
\email{$\{$giuseppa.castiglione,gabriele.fici,antonio.restivo$\}$@unipa.it}
}

\maketitle
\setcounter{footnote}{0}

\begin{abstract}
Given a (finite or infinite) subset $X$ of the free monoid $A^*$ over a finite alphabet $A$, the rank of $X$ is the minimal cardinality of a set $F$ such that $X \subseteq F^*$.
A submonoid $M$ generated by $k$ elements of $A^*$ is $k$-maximal if there does not exist another submonoid generated by at most $k$ words containing $M$. We call a set $X \subseteq A^*$ primitive if it is the basis of a $|X|$-maximal submonoid. This extends the notion of primitive word: indeed, $\{w\}$ is a primitive set if and only if $w$ is a primitive word. 
By definition, for any set $X$, there exists a primitive set $Y$ such that $X \subseteq Y^*$. The set $Y$ is therefore called a primitive root of $X$. As a main result, we prove that if a set has rank $2$, then it has a unique primitive root. This result cannot be extended to sets of rank larger than 2.
For a single word $w$,  we say that the set $\{x,y\}$ is a {\em binary root} of $w$ if $w$ can be written as a concatenation of copies of $x$ and $y$ and $\{x,y\}$ is a primitive set. We prove that every primitive word $w$ has at most one binary root $\{x,y\}$ such that $|x|+|y|<\sqrt{|w|}$. That is, the binary root of a word is unique provided the length of the word is sufficiently large with respect to the size of the root.
Our results are also compared to previous approaches that investigate pseudo-repetitions, where a morphic involutive function $\theta$ is defined on $A^*$. In this setting, the notions of $\theta$-power, $\theta$-primitive and $\theta$-root are defined, and it is shown that any word has a unique $\theta$-primitive root. This result can be obtained with our approach by showing that a word $w$ is $\theta$-primitive if and only if $\{w, \theta(w)\}$ is a primitive set.

\keywords{Repetition, pseudo-repetition, hidden repetition, primitive set, binary root, $k$-maximal monoid.}
\end{abstract}

\section{Introduction}

The notion of {\em rank} plays an important role in combinatorics on words. Given a subset $X$ of the free monoid $A^*$ over a finite alphabet $A$, the rank of $X$, in symbols $r(X)$, is defined as the smallest number of words needed to express all words of $X$, i.e., as the minimal cardinality of a set $F$ such that $X \subseteq F^*$. Notice that this minimal set $F$ may not be unique. For instance, the set $X = \{aabca,aa,bcaaa\}$ has rank $2$ and there exist two distinct sets $F_1 = \{aa,bca\}$ and $F_2 = \{a,bc\}$ such that $X \subseteq F_1^*$ and $X \subseteq F_2^*$. It is worth noticing that since $r(X) \leq min\{|X|,|A|\}$, $r(X)$ is always finite even if $X$ is an infinite set.

A set $X$ is said to be {\em elementary} if $r(X) = |X|$. The notion of rank -- and the related notion of elementary set -- have been investigated in several papers (cf.~\cite{Neraud90,Neraud92,Neraud93}). In particular, in~\cite{Neraud90} it is shown that the problem to decide whether a finite set is elementary is co-NP-complete.

In this paper, we introduce the notion of primitiveness for a {\em set} of words, which is closely related to that of rank. We first define the notion of {\em $k$-maximal} submomoid. A submonoid $M$ of $A^*$, generated by $k$ elements, is $k$-maximal if there does not exist another submonoid generated by at most $k$ words containing $M$. We then call a set $X \subseteq A^*$ {\em primitive} if it is the basis of a $|X|$-maximal submonoid. Notice that if $X$ is primitive, then $r(X) = |X|$, i.e., $X$ is elementary. The converse is not in general true: there exist elementary sets that are not primitive. For instance, the set $F_{1}=\{aa,bca\}$ is elementary, but it is not primitive since $F_{1}^{*} \subseteq F_{2}^*=\{a,bc\}^*$. The set $F_{2}$, instead, is primitive. 

The notion of primitive set can be seen as an extension of the classical notion of primitive word. Indeed, given a word $w \in A^*$, the set $\{w\}$ is primitive if and only if the word $w$ is primitive. For instance, the set $\{abab,abababab\}$ is not elementary, the set $\{abab\}$ is elementary but not primitive, and the set $\{ab\}$ is primitive.

We have from that definition that for every set $X$, there exists a primitive set $Y$ such that $X \subseteq Y^*$. The set $Y$ is therefore called a \emph{primitive root} of $X$. However, the primitive root of a set is not, in general, unique. Consider for instance the set $X = \{abcbab,abcdcbab,abcdcdcbab\}$. It has rank $3$, hence it is elementary, yet it is not primitive. Indeed, $X \subseteq \{ab,cb,cd\}^*$. The set $\{ab,cb,cd\}$ is primitive, and it is a primitive root of $X$. However, it is not the only primitive root of $X$: the set $\{abc,dc,bab\}$ is primitive and $X \subseteq \{abc,dc,bab\}^*$, hence $\{abc,dc,bab\}$ is another primitive root of $X$. In the special case of sets of rank $1$, clearly these always have a unique primitive root. For instance, the primitive root of the set $\{abab,abababab\}$ is the set $\{ab\}$.

As a main result, we prove that if a set has rank $2$, then it has a unique primitive root. This is equivalent to say that for every pair of nonempty words $\{x,y\}$ such that $xy\neq yx$ there exists a unique primitive set $\{u,v\}$ such that $x$ and $y$ can be written as concatenations of copies of $u$ and $v$. The proof is based on the algebraic properties of $k$-maximal submonoids of a free monoid. 

In this investigation, we also take into account another notion of rank, that of {\em free rank} (in the literature, in order to avoid ambiguity, the notion of rank we gave above is often referred to as the {\em combinatorial rank}). The free rank of a set $X$ is the cardinality of the basis of the minimal {\em free} submonoid containing $X$. Closely related to the notion of free rank is the {\em defect theorem}, which states that if $X$ is not a {\em code} (i.e., $X^*$ is not a free submonoid), then the free rank of $X$ is strictly smaller than its cardinality. We are specially interested in the case $k=2$ (that is, the case of $2$-maximal submonoids) and we use the fact that, in this special case, the notions of free rank and (combinatorial) rank coincide. A fundamental step in our argument is Theorem \ref{th_case_2}, which states that the intersection of two $2$-maximal submonoids is either the empty word or a submonoid generated by one primitive word. As a consequence, for every submonoid $M$ generated by two words that do not commute, there exists a unique $2$-maximal submonoid containing $M$. This is equivalent to the fact that every set of rank $2$ has a unique primitive root. One of the examples we gave above shows that this result is no longer true for sets of rank $3$ or larger --- this highlights the very special role of sets of rank $1$ or $2$.

From these results we derive some consequences on the combinatorics of a {\em single} word. Given a word $w$,  we say that $\{x,y\}$ is a binary root of $w$ if $w$ can be written as a concatenation of copies of $x$ and $y$ and $\{x,y\}$ is a primitive set. We prove that every primitive word $w$ has at most one binary root $\{x,y\}$ such that $|x|+|y|<\sqrt{|w|}$. That is, the binary root of a word is unique provided the length of the word is sufficiently large with respect to the size of the root. The notion of binary root of a single word may be seen as a way to capture a hidden ``repetitive structure'', which encompasses the classical notion of integer repetition (non-primitive word). Indeed, the existence in a word $w$ of a ``short'' (with respect to $|w|$) binary root reveals some hidden repetition in the word.

As described in the last section, our results can  also be compared to previous approaches that investigate {\em pseudo-repetitions}, where an involutive  morphism (or antimorphism) $\theta$ is defined on the set of words $A^*$. This idea stems from  the seminal paper of Czeizler, Kari and Seki~\cite{CKS}, where originally $\theta$ was the Watson-Crick complementarity function and the motivation was the discovery of hidden repetitive structures in biological sequences.   A word $w$ is called a $\theta$-power if there exists a word $v$ such that $w$ can be factored using copies of $v$ and $\theta(v)$ --- otherwise the word $w$ is called $\theta$-primitive. If $v$ is a $\theta$-primitive word, then it is called the $\theta$-primitive root of $w$. Of course, since the same applies to the word $\theta(w)$, these definitions can be given in terms of the pair $\{w,\theta(w)\}$ and considering as the root the pair $\{v,\theta(v)\}$. With our results, we generalize this setting by considering as a root any pair of words $\{x,y\}$, i.e., dropping the relation between the components of the pair.

\section{Preliminaries}\label{sec:prem}

Given a finite nonempty set $A$, called the \emph{alphabet}, with $A^*$ (resp.~$A^+=A^*\setminus\{\epsilon\}$) we denote the {\em free monoid} (resp.~{\em free semigroup}) generated by $A$, i.e., the set of all finite words (resp.~all finite nonempty words) over $A$.

The length $|w|$ of a word $w\in A^*$ is the number of its symbols. The length of the empty word $\epsilon$ is $0$. For a word $w=uvz$, with $u,v,z\in A^*$, we say that $v$ is a \emph{factor} of $w$. Such a factor is called \emph{internal} if $u,z\neq \epsilon$, a \emph{prefix} if $u=\epsilon$, or a  \emph{suffix} if $z=\epsilon$. A word $w$ is \emph{primitive} if $w=v^n$ implies $n=1$, otherwise it is called \emph{a power}. Equivalently, $w$ is primitive if and only if it is not an internal factor of $w^2$.

It is well known in combinatorics on words (see, e.g., \cite{Lothaire1}) that given two words $x$ and $y$ we have $xy=yx$ if and only if $x$ and $y$ are powers of the same word.

Given a subset $X$ of $A^*$, we let $X^*$ denote the submonoid of $A^*$ generated by $X$ (under concatenation). Conversely, given a submonoid $M$ of $A^*$, there exists a unique set $X$ that generates $M$ and is minimal for set inclusion.
In fact, $X$ is the set
\begin{equation}\label{base}
    X=(M \setminus \{\epsilon\}) \setminus (M \setminus \{\epsilon\})^2,
\end{equation}
i.e., $X$ is the set of nonempty words of $M$ that cannot be written as a product of two nonempty words of $M$. The set $X$ will be referred to as the {\em minimal generating set} of $M$, or the set of {\em generators} of $M$.

Let $M$ be a submonoid of $A^*$ and $X$ its minimal generating set.
$M$ is said to be {\em free} if any word of $M$ can be {\em uniquely} expressed as a product of elements of $X$. The minimal generating set of a free submonoid $M$ of $A^*$ is called a {\em code}; it is referred to as \emph{the basis} of $M$.
It is easy to see that a set $X$ is a code if and only if, for every $x,y \in X$, $x\neq y$, one has $xX^* \cap yX^* = \emptyset$.
We say that $X$ is a {\em prefix code} (resp.~a {\em suffix code}) if for all $x,y \in X$, one has $x \cap yA^* = \emptyset$ (resp.~$x \cap A^*y = \emptyset$). A code is a {\em bifix code} if it is both a prefix and a suffix code.
It follows from elementary automata theory that if $X$ is a prefix code, then there exists a  DFA $\mathcal{A}_X$ recognizing $X^*$ whose set of states $Q_X$ verifies (cf.~\cite{BPR}):  $$\vert Q_X \vert \leq \sum_{x \in X} \vert x \vert - \vert X \vert +1.$$

A submonoid $M$ of $A^*$ is called {\em pure} (cf. \cite{Restivo74}) if for all $w\in A^*$ and $n \geq 1$,
$$w^n \in M \Rightarrow w \in M.$$

By a result of Tilson~\cite{Tilson}, any nonempty intersection of free submonoids of $A^*$ is free. As a consequence, for any subset $X \subseteq A^*$, there exists the smallest free submonoid containing $X$.

Here we mention the well-known Defect Theorem (cf.~\cite{BPPR79}, ~\cite[Chap.~1]{Lothaire1}, \cite[Chap.~6]{Lothaire2}), a fundamental result in the theory of codes that provides a relation between a given subset $X$ of $A^*$ and the basis of the minimal free submonoid containing $X$ (called the {\em free hull} of $X$).

\begin{theorem}[Defect Theorem]\label{thm:defect}
Let $X$ be a finite nonempty subset of $A^*$. Let $Y$ be the basis of the free hull of $X$.
Then either $X$ is a code, and $Y=X$, or
$$|Y| \leq |X| -1.$$
\end{theorem}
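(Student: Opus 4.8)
The plan is to deduce the bound from the behaviour of a single map into the basis $Y$, using the minimality of the free hull as the key leverage. If $X$ is a code, then $X^{*}$ is free, so by definition the free hull of $X$ is $X^{*}$ itself and $Y=X$; this disposes of the first alternative. We may moreover assume $\epsilon\notin X$: since $\epsilon$ lies in every submonoid, the free hull is unchanged on removing it, and the desired inequality for $X$ follows at once from the one for $X\setminus\{\epsilon\}$, which has one fewer element. So from now on $X\subseteq A^{+}$ is not a code, $M$ is its free hull, and $Y$ is the basis of $M$; thus $Y$ is a code and $X\subseteq Y^{*}$.

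Because $Y$ is a code, each $x\in X$ has a unique factorization over $Y$; let $\alpha\colon X\to Y$ be the map sending $x$ to the first factor of this factorization, i.e. the unique $y\in Y$ with $x\in yY^{*}$. The argument then rests on two assertions: (i) $\alpha$ is surjective, and (ii) if $\alpha$ is injective, then $X$ is a code. Granting both, a non-code $X$ yields a non-injective $\alpha$, so $|Y|=|\alpha(X)|\le|X|-1$, which is the claim. Assertion (ii) is the easy half: given any relation $x_{1}\cdots x_{m}=x_{1}'\cdots x_{n}'$ with all factors in $X$, comparing the first $Y$-factor of the two sides gives $\alpha(x_{1})=\alpha(x_{1}')$, hence $x_{1}=x_{1}'$ by injectivity; cancelling on the left in the free (hence cancellative) monoid $A^{*}$ and inducting on the length of the word gives $m=n$ and $x_{i}=x_{i}'$ throughout, so $X$ is a code (here $\epsilon\notin X$ is what secures the base case).

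The substance lies in assertion (i). Suppose $Z:=\alpha(X)\subsetneq Y$ and set $N:=\{\epsilon\}\cup ZY^{*}$. One checks that $N$ is a submonoid of $A^{*}$ and that its minimal generating set is $Z(Y\setminus Z)^{*}$; this set is a code (for instance a suffix code), because in the $Y$-factorization of any word of $N$ the factors that lie in $Z$ are precisely the left ends of the generators, so the factorization over $Z(Y\setminus Z)^{*}$ is forced. Hence $N$ is free. Now $X\subseteq ZY^{*}\subseteq N$ and $N\subseteq Y^{*}=M$, so the minimality of the free hull $M$ forces $M=N$. But then $Y\subseteq N$, so each $b\in Y$ lies in $ZY^{*}$; since $b$ is a single $Y$-factor and $Y$ is a code, this forces $b\in Z$, contradicting $Z\subsetneq Y$. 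Hence $\alpha$ is surjective.

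The main obstacle is assertion (i), and within it the delicate point is showing that $N$ is free, i.e. that $Z(Y\setminus Z)^{*}$ is a code: this is exactly the input that allows us to invoke the minimality of the free hull and reach the contradiction $M=N$. The remaining pieces --- the code case, the elimination of $\epsilon$, the cancellation argument in (ii), and the verification that $N$ is a submonoid --- are all routine.
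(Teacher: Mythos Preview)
The paper does not give its own proof of this statement: the Defect Theorem is quoted as a classical result with references to \cite{BPPR79}, \cite[Chap.~1]{Lothaire1} and \cite[Chap.~6]{Lothaire2}, and is used as a black box thereafter. There is therefore no in-paper argument to compare against.

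That said, your proof is correct and is in fact the standard one found in the cited references. The two key ingredients --- the first-factor map $\alpha\colon X\to Y$, and the auxiliary free submonoid $N=\{\epsilon\}\cup ZY^{*}$ used to force surjectivity of $\alpha$ via the minimality of the free hull --- are exactly what appears in Lothaire and in Berstel--Perrin. One small remark: the parenthetical ``for instance a suffix code'' is correct only when $Z(Y\setminus Z)^{*}$ is read as a set of words over the alphabet $Y$ (in $A^{*}$ it need not be a suffix code); your subsequent sentence, which derives uniqueness of the factorization from the positions of $Z$-letters in the $Y$-factorization, is the right justification and stands on its own, so this is at most a presentational wrinkle rather than a gap.
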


As in \cite{HarjuK04}, given a set $X \subseteq A^*$, we let $r_f(X)$ denote the cardinality of the basis of the free hull of $X$, called the {\em free rank} of $X$. Notice that for any subset $X \subseteq A^*$, $X$ and $X^*$ have the same free rank.  Furthermore, by $r(X)$ we denote the {\em combinatorial rank} (or simply rank) of $X$, defined by:
$$r(X)=\min\{|Y| \mid Y \subseteq A^*, X \subseteq Y^*\}.$$
With this notation, the Defect Theorem can be stated as follows.

\begin{theorem}
Let $X$ be a finite nonempty subset of $A^*$. Then $r_f(X) \leq |X|$, and the equality holds if and only if $X$ is a code.
\end{theorem}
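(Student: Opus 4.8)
The plan is to obtain this statement directly from the Defect Theorem (Theorem~\ref{thm:defect}), of which it is simply a reformulation in the language of the free rank. Recall that, by definition, $r_f(X) = |Y|$, where $Y$ is the basis of the free hull of $X$, i.e., of the smallest free submonoid containing $X$ (whose existence is guaranteed by the result of Tilson cited above).

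For the inequality $r_f(X) \le |X|$ I would distinguish the two cases furnished by Theorem~\ref{thm:defect}. If $X$ is a code, then $X^*$ is already a free submonoid containing $X$; since the free hull is the \emph{smallest} free submonoid containing $X$, it is contained in $X^*$, while, being a submonoid that contains $X$, it must also contain $X^*$; hence the free hull is precisely $X^*$, and its basis is the minimal generating set of $X^*$, which by~\eqref{base} is exactly $X$. Therefore $r_f(X) = |X|$. If instead $X$ is not a code, then Theorem~\ref{thm:defect} gives $|Y| \le |X| - 1$, whence $r_f(X) \le |X| - 1 < |X|$. In either case $r_f(X) \le |X|$.

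The equality characterization now falls out of the same dichotomy: the first case above yields the implication ``$X$ a code $\Rightarrow r_f(X) = |X|$'', while the second yields ``$X$ not a code $\Rightarrow r_f(X) < |X|$'', and together these say that $r_f(X) = |X|$ if and only if $X$ is a code. I do not expect any genuine obstacle here, since all the substance is already packed into Theorem~\ref{thm:defect}; the only step worth spelling out is the elementary observation that the free hull of a code $X$ coincides with $X^*$, carried out above.
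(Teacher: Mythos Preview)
Your proposal is correct and matches the paper's approach: the paper does not give a separate proof of this theorem, presenting it simply as a restatement of Theorem~\ref{thm:defect} in the language of the free rank. You have merely made explicit the one step the paper leaves implicit, namely that the free hull of a code $X$ is $X^*$ itself.
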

Note that, for any $X \subseteq A^+$, one has
$$r(X) \leq r_f(X) \leq |X|.$$

\begin{example}
Let $X= \{aa,ba,baa\}$. One can prove that $X$ is a code, hence we have $r_f(X)=3$, while $r(X)=2$ since $X\subset \{a,b\}^*$. For $X= \{aa, aaa\}$, we have $r(X)=r_f(X)=1.$
\end{example}

\begin{remark}\label{rank2}
If $\vert X\vert=2$ then $r_f(X)=r(X)$. So for sets of cardinality $2$ we will not specify if we refer to the free rank or to the (combinatorial) rank.

Moreover, from the complexity point of view, N\'eraud proved that deciding if a set has rank $2$ can be done in polynomial time \cite{Neraud93}, whereas for general rank $k$ it is an NP-hard problem~\cite{Neraud90}.
\end{remark}

The {\em dependency graph} (cf.~\cite{HarjuK04}) of a finite set $X\subset A^+$ is the graph $G_X=(X,E_X)$ where $E_X=\{(u,v) \in X \times X \mid uX^* \cap vX^* \neq \emptyset\}$.
Notice that if $X$ is a code, then $G_X$ has no edge. Furthermore, if $(u,v)$ is an edge, then $u$ is a prefix of $v$ or vice versa.
In \cite{Harju1986} and \cite{HarjuK04}, the following useful lemma is proved.  

\begin{lemma}[Graph Lemma] Let $X\subseteq A^+$ be a finite set that is not a code. Then
$$r_f(X)\leq c(X) < |X|,$$
where $c(X)$ is the number of  connected components of $G_X$.
\end{lemma}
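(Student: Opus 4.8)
The plan is to establish the two inequalities $r_f(X)\le c(X)$ and $c(X)<|X|$ separately. The second one is immediate: since $X$ is not a code, there are distinct words $u,v\in X$ with $uX^*\cap vX^*\neq\emptyset$, so $u$ and $v$ lie in the same connected component of $G_X$, whence $G_X$ has at most $|X|-1$ components.

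For the inequality $r_f(X)\le c(X)$, let $Y$ be the basis of the free hull of $X$, so that $Y$ is a code, $Y^*$ is the smallest free submonoid containing $X$, and $r_f(X)=|Y|$. Because $Y^*$ is free and every word of $X$ is nonempty, each $x\in X$ has a unique factorization over $Y$; write $P(x)\in Y$ for its first factor. The argument rests on two claims: (a) $P$ is constant on every connected component of $G_X$, and (b) $P$ is surjective onto $Y$. Granting these, $Y=P(X)$ is a union of at most $c(X)$ singletons, so $|Y|\le c(X)$, which is what we want. Claim (a) is easy: if $uX^*\cap vX^*\neq\emptyset$, choose $p,q\in X^*\subseteq Y^*$ with $up=vq$; since the word $up=vq$ has a single factorization over $Y$ and $u,v$ are nonempty, its first factor equals both $P(u)$ and $P(v)$, so $P(u)=P(v)$.

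The substantial point is the surjectivity (b). Suppose $T:=P(X)\subsetneq Y$. I would then exhibit a free submonoid containing $X$ but properly contained in $Y^*$, contradicting the minimality of the free hull. Consider $N=\{\epsilon\}\cup TY^*$: it is a submonoid, since $TY^*\cdot TY^*\subseteq TY^*$; its minimal generating set is $Z=T\,(Y\setminus T)^*$; and $Z$ is a code, because a word of $N$ inherits its unique $Y$-factorization and any $Z$-factorization of it must cut precisely before each $Y$-factor lying in $T$, hence is forced. Thus $N=Z^*$ is free. Now $X\subseteq N$, because each $x\in X$ begins with $P(x)\in T$ in its $Y$-factorization and so lies in $TY^*$; and $N\subsetneq Y^*$, because any $y\in Y\setminus T$ fails to lie in $TY^*$. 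Hence $X\subseteq N\subsetneq Y^*$ with $N$ free, contradicting the fact that $Y^*$ is the smallest free submonoid containing $X$. Therefore $T=Y$ and $r_f(X)=|Y|=|P(X)|\le c(X)$.

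The main obstacle is exactly this last step: one must check carefully that $Z=T(Y\setminus T)^*$ is indeed a code, so that $N$ is genuinely free, and that $N$ is properly contained in $Y^*$; once these are in place, the minimality of the free hull closes the argument, and everything else is routine.
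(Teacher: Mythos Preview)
The paper does not actually prove the Graph Lemma; it merely quotes it and cites \cite{Harju1986,HarjuK04}. So there is no in-paper proof to compare against. Your argument is correct and is essentially the standard proof one finds in those references: the map $P\colon X\to Y$ sending each $x$ to the first factor of its unique $Y$-factorization is constant on connected components of $G_X$ and surjective onto the basis $Y$ of the free hull, whence $|Y|\le c(X)$; the strict inequality $c(X)<|X|$ follows at once from the existence of at least one edge when $X$ is not a code.

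Your verification of surjectivity via the free submonoid $N=\{\epsilon\}\cup TY^*$ with $T=P(X)$ is exactly the usual device, and the details you give (that $Z=T(Y\setminus T)^*$ is the minimal generating set of $N$, that it is a code because cuts are forced at the $T$-letters of the $Y$-factorization, and that any $y\in Y\setminus T$ witnesses $N\subsetneq Y^*$) are all sound. There is nothing missing.
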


\begin{example}
Let $X= \{a,ab, abc, bca, acb, cba\}.$ We have $acba= a \cdot cba= acb \cdot a$ and $abca= a \cdot bca= abc \cdot a$. The basis of the free hull of $X$ is $Y=\{a, ab, bc, cb\}$, hence $r_f(X)=4$. Furthermore, $r(X)=3$ and $c(X)= 4$, as shown in Figure~\ref{fig:1}.
\end{example}

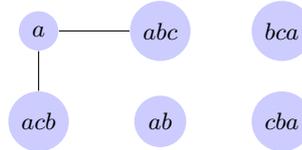
\begin{figure}
\begin{center}
\begin{tikzpicture}
  [scale=.8,auto=left,every node/.style={circle,fill=blue!20}]
  \node (na) at (4,9)  {$a$};
  \node (nabc) at (6,9)  {$abc$};
  \node (nbca) at (8,9)  {$bca$};

  \node (nacb) at (4,7.5) {$acb$};
  \node (nab) at (6,7.5)  {$ab$};
  \node (ncba) at (8,7.5)  {$cba$};

  \foreach \from/\to in {na/nabc}
        \draw (\from) -- (\to);
     \foreach \from/\to in {na/nacb}
    \draw (\from) -- (\to);
\end{tikzpicture}
\end{center}
\caption{\label{fig:1}The dependency graph of $X= \{a, ab, abc, bca, acb, cba\}$.}
\end{figure}

\section{$k$-Maximal Monoids}\label{sec:prim}

With $\mathcal{M}_k$ we denote the family of submonoids of $A^*$ having at most $k$ generators in $A^+$.
The following definition is fundamental for the theory developed in this paper.

\begin{definition}
A submonoid $M \in \mathcal{M}_k$ is $k$-maximal if for every $M^\prime \in \mathcal{M}_k$,  $M \subseteq M^\prime$ implies $M=M^\prime$.
\end{definition}
In other words, $M$ is $k$-maximal if it is not possible to find another submonoid generated by at most $k$ words containing $M$.

\begin{example}
Let $A=\{a,b,c\}$. The submonoid $M=\{a,abca\}^*$ is not $2$-maximal since $abca$ can be factored with $a$ and $bc$, hence $M$ is contained in $\{a,bc\}^*$. On the contrary, $\{a,bc\}^*$ is $2$-maximal since, obviously, $a$ and $bc$ cannot be factored using two common factors.
\end{example}

\begin{example}\label{3-max}
Let $A=\{a,b,c,d\}$. The submonoid $\{a,cbd,dbd\}^*$ is $3$-maximal, whereas $\{a,cbd, dcbd\}^*$ is not $3$-maximal since it is contained in $\{a,d,cb\}^*.$ 
\end {example}

\begin{proposition}\label{prop:bifix}
Let $M$ be a $k$-maximal submonoid and $X$ its minimal generating set. Then, $X$ is a bifix code.
\end{proposition}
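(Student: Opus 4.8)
The plan is to argue by contradiction, using the explicit description \eqref{base} of the minimal generating set together with the $k$-maximality of $M$. A preliminary remark is that $X$ is contained in \emph{every} generating set of $M$: if $G$ generates $M$ we may assume $\epsilon\notin G$, and for $x\in X$ we can write $x=g_1\cdots g_m$ with $g_i\in G\subseteq M\setminus\{\epsilon\}$; if $m\geq 2$ this would put $x$ in $(M\setminus\{\epsilon\})^2$, contradicting $x\in X$, so $m=1$ and $x=g_1\in G$. Since $M\in\mathcal{M}_k$ has a generating set of size at most $k$, we get $|X|\leq k$ (in particular $X$ is finite), and more generally any set obtained from $X$ by replacing a word with one of its proper factors still has at most $k$ elements.

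Now suppose $X$ is not a prefix code. Then there are distinct $x,y\in X$ with $x$ a proper prefix of $y$; write $y=xz$ with $z\in A^+$. Put $X'=(X\setminus\{y\})\cup\{z\}$. Since $x\in X\setminus\{y\}\subseteq X'$ and $z\in X'$, we have $y=xz\in (X')^*$, while every other word of $X$ already lies in $X'$; hence $M=X^*\subseteq (X')^*$. As $|X'|\leq |X|\leq k$, we have $(X')^*\in\mathcal{M}_k$, so $k$-maximality of $M$ forces $M=(X')^*$. In particular $z\in M$, and $z\neq\epsilon$, so $y=xz$ exhibits $y$ as an element of $(M\setminus\{\epsilon\})^2$ --- contradicting $y\in X=(M\setminus\{\epsilon\})\setminus(M\setminus\{\epsilon\})^2$. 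Therefore $X$ is a prefix code. The symmetric argument, writing $y=zx$ when $x$ is a proper suffix of $y$ and taking $X'=(X\setminus\{y\})\cup\{z\}$, shows that $X$ is a suffix code as well; since a prefix code is a code, $X$ is a bifix code.

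I do not expect a serious obstacle: the only points needing care are checking that the modified set $X'$ stays in $\mathcal{M}_k$ (the operation can only decrease the number of generators, never increase it) and the trivial edge cases ($z=\epsilon$ would force $x=y$, which is excluded; $z$ possibly coinciding with a word already in $X$, which is harmless). The conceptual content is simply that $k$-maximality upgrades the inclusion $M\subseteq (X')^*$ to an equality, which then clashes with the minimality built into \eqref{base}.
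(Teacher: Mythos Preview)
Your proof is correct and follows essentially the same approach as the paper: both replace the offending word $y=xz$ by $z$ to obtain a set $X'=(X\setminus\{y\})\cup\{z\}$ of at most $k$ generators with $M\subseteq (X')^*$, and derive a contradiction with $k$-maximality. Your version is in fact more careful than the paper's, which simply asserts that the inclusion contradicts $k$-maximality without explaining why it is strict; you close this gap by noting that equality would force $z\in M$ and hence $y\in(M\setminus\{\epsilon\})^2$, contradicting $y\in X$.
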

\begin{proof}
By contradiction, if $X$ is not prefix (resp. not suffix) then there exist $u,v \in X$ and $t\in A^+$ such that $v=ut$ (resp $v=tu$). It follows that $X^* \subseteq (X  \setminus \{v\} \cup \{t\})^*$, whence $X^*=M$ is not $k$-maximal.
\qed
\end{proof}

\begin{remark}
By Proposition \ref{prop:bifix}, it follows that if $X^*$ is $k$-maximal, then $r(X)=r_f(X)=k.$ The inverse implication does not hold in general. For example, the submonoid $X^*=\{a,cbd,dcbd\}^*$ of Example~\ref{3-max} has both rank and free rank  equal to $3$ and is bifix, but it is not $3$-maximal.
\end{remark}


\begin{proposition}\label{prop:pure}
Let $M$ be a $k$-maximal submonoid. Then $M$ is a pure submonoid.
\end{proposition}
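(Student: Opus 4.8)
The plan is to argue by contradiction. Suppose that $M$ is $k$-maximal but not pure; then there exist $w \in A^*$ and $n \ge 1$ with $w^n \in M$ but $w \notin M$. Since $\epsilon \in M$ we have $w \neq \epsilon$, and since $w^1 = w \notin M$ we must have $n \ge 2$. Let $X$ be the minimal generating set of $M$, so $|X| \le k$, and note $w \notin X$ because $X \subseteq M$.

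The core of the argument is to apply the Defect Theorem to the finite set $X \cup \{w\}$, which has $|X| + 1 \le k+1$ elements. First I would check that $X \cup \{w\}$ is not a code: the word $w^n$ admits the factorization consisting of $n$ copies of the generator $w$, and also, because $w^n \in M = X^*$, a nonempty factorization into elements of $X$; the latter does not involve the generator $w$ (since $w \notin X$), so the two factorizations are distinct. Hence, by the Defect Theorem, the free hull of $X \cup \{w\}$ has a basis $Y$ with $|Y| \le |X \cup \{w\}| - 1 \le k$.

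It then remains to observe that $Y$ contradicts the $k$-maximality of $M$. Since $Y$ is a code we have $\epsilon \notin Y$, so $Y^* \in \mathcal{M}_k$. Moreover $M = X^* \subseteq (X \cup \{w\})^* \subseteq Y^*$, while $w \in (X \cup \{w\})^* \subseteq Y^*$ and $w \notin M$, so the inclusion $M \subseteq Y^*$ is strict. This contradicts the $k$-maximality of $M$, and therefore $M$ is pure.

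I do not expect a genuine obstacle here: the only points requiring care are the reduction to the case $n \ge 2$ and the verification that $X \cup \{w\}$ fails to be a code, which is precisely where $w \notin X$ and $n \ge 2$ are used; all the substantive content is supplied by the Defect Theorem.
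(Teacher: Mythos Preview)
Your proof is correct and follows essentially the same route as the paper: both apply the Defect Theorem to the set $X \cup \{w\}$ (after observing it fails to be a code because $w^n$ has two distinct factorizations) to obtain a generating set of size at most $k$, and then use $k$-maximality to reach a contradiction. The only cosmetic difference is that the paper phrases the argument directly (concluding $z \in X^*$) rather than by contradiction, and you are a bit more explicit about the reduction to $n\ge 2$ and the verification that $X\cup\{w\}$ is not a code.
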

\begin{proof}
 We have to show that, for every $z\in A^*$, if $z^n\in M$, for some $n\geq 1$, then $z\in M$. Let $X$ be the minimal generating set of $M$. If $z^n\in M$, for some $n>1$, then $z\in X$ or the set $X \cup \{z\}$ is not a code. By the Defect Theorem (Theorem~\ref{thm:defect}), there exist $u_1,u_2,\ldots,u_k\in A^+$ such that $(X \cup \{z\})^*\subseteq \{u_1, u_2,...,u_k\}^*$. Since $X^*\subseteq \{u_1, u_2,\ldots,u_k\}^*$ and $X^*$ is $k$-maximal, we have that $X=\{u_1, u_2,\ldots,u_k\}$. Therefore,  $X \cup \{z\} \subseteq X^*$, hence $z\in  X^*$.
\end{proof}

As a direct consequence of Proposition \ref{prop:pure}, we have that a $k$-maximal submonoid is generated by primitive words. However,  not any set of $k$ primitive words generates a $k$-maximal monoid (e.g., $X=\{ab,ba\}^*$ is not $2$-maximal since it is contained in $\{a,b\}^*$).

Submonoids generated by two words, i.e., the elements of $\mathcal{M}_2$, are of special interest for our purposes.
They have been extensively studied in literature (cf.~\cite{LenSchut,Kar84,Neraud93,Lerest}) and play an important role in some fundamental aspects of combinatorics on words. 

The reader may observe that, as a consequence of some well-known results in combinatorics on words, the submonoids in $\mathcal{M}_1$ have the following important property: If $x^*$ and $u^*$ are $1$-maximal submonoids (i.e., $x$ and $u$ are primitive words) then $x^* \cap u^* = \{\epsilon\}.$
Next Theorem~\ref{th_case_2}, which represents the main result of this section, can be seen as a generalization of this result to the case of $2$-maximal submonoids.

It is known (see \cite{Kar84}) that if $X$ and $U$ both have rank $2$, then the intersection $X^*\cap U^*$ is a free monoid generated either by at most two words or by an infinite set of words.  

\begin{example}\label{no2max}
Let $X_1=\{abca,bc\}$ and $U_1=\{a, bcabc\}$. One can verify that $X_1^* \cap U_1^*=\{abcabc, bcabca\}^*.$ Let $X_2=\{aab, aba\}$ and $U_2=\{a, baaba\}$. Then $X_2^* \cap U_2^*=(a(abaaba)^*baaba)^*.$
\end{example}
In the previous example, we have two submonoids that are not $2$-maximal. Indeed, $X_1^*, U_1^* \subseteq \{a, bc\}^*$ and $X_2^*, U_2^* \subseteq \{a, b\}^*.$ We now address the question of finding the generators of the intersection of two $2$-maximal submonoids.

\begin{theorem}\label{th_case_2}
Let $X^*=\{x,y\}^*$ and $U^*=\{u,v\}^*$ be two $2$-maximal submonoids. If $X^*\cap U^*\neq \{\epsilon\}$, then there exists a word $z\in A^+$ such that $X^*\cap U^*=z^*$. Moreover, $z$ is primitive, that is, $X^*\cap U^*$ is $1$-maximal.
\end{theorem}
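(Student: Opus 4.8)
The plan is to exploit the fact that, for sets of rank $2$, combinatorial rank and free rank coincide (Remark~\ref{rank2}), so that the intersection $X^* \cap U^*$, being a nonempty intersection of free submonoids, is itself free by Tilson's theorem. Let $W$ be its basis. The first step is to bound the size of $W$: since $X^*$ and $U^*$ both have rank $2$, the quoted result of Karhum\"aki (from~\cite{Kar84}) tells us that $X^* \cap U^*$ is either generated by at most two words or by an infinite set. I would first dispose of the infinite case, and then show that the two-generator case actually collapses to one generator. So the argument splits according to $|W| \in \{1, 2\}$ versus $|W|$ infinite.

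For the infinite case, the idea is that if $W$ were infinite, then in particular $X^* \cap U^*$ would contain arbitrarily long words, and one should derive that $X$ and $U$ share a common sublanguage forcing $r(X) < 2$ or $r(U) < 2$. Concretely, Example~\ref{no2max} is the template: the infinite-intersection situations there occur precisely because $X^*$ and $U^*$ are both contained in a common $\{a,b\}^*$, i.e., they are not $2$-maximal. The plan is to argue the contrapositive: if $X^* \cap U^*$ is infinite, I would use the structure of the intersection (a free monoid on an infinite basis, with all its words factoring over both $\{x,y\}$ and $\{u,v\}$) to produce a set $F$ with $|F| \le 2$ and $X^* \cup U^* \subseteq F^*$ strictly larger than $X^*$; this contradicts the $2$-maximality of $X^*$. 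This is the step I expect to be the main obstacle, since extracting the common "small" root from an infinite intersection requires a careful combinatorial argument (essentially a periodicity/Fine–Wilf type analysis on the long common words, or an appeal to the detailed description of $\mathcal{M}_2$-intersections in~\cite{Kar84,Neraud93}).

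It then remains to handle the case $1 \le |W| \le 2$. If $|W| = 1$, write $W = \{z\}$; then $X^* \cap U^* = z^*$, and $z$ must be primitive because $z^*$ is pure (a submonoid defined as an intersection of pure submonoids is pure, and $X^*$, $U^*$ are pure by Proposition~\ref{prop:pure}), so $z = p^n$ with $p$ primitive forces $p \in z^*$, i.e., $n = 1$; equivalently $z^*$ is $1$-maximal. The crux is therefore to rule out $|W| = 2$. Suppose $W = \{z_1, z_2\}$ with $z_1 z_2 \neq z_2 z_1$. Then $W^* = X^* \cap U^*$ is a submonoid of rank $2$ contained in the $2$-maximal monoid $X^*$, so by $2$-maximality $W^* = X^*$; but symmetrically $W^* = U^*$, giving $X^* = U^*$ — a degenerate case in which the statement still holds vacuously (the intersection is the whole monoid, not a proper intersection), so either we exclude it by hypothesis or observe it is consistent. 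If instead $z_1 z_2 = z_2 z_1$, then $z_1, z_2$ are powers of a common primitive word $z$, so $W^* \subseteq z^*$ and in fact $X^* \cap U^* = W^* = z^*$ (since $z \in X^* \cap U^*$ would follow, while any word of $z^*$ lying in both is captured), reducing to the $|W|=1$ conclusion with $z$ primitive.

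Finally, I would record the stated corollary: given a submonoid $M = \{x,y\}^*$ with $xy \neq yx$, the collection of $2$-maximal submonoids containing $M$ is nonempty (it contains at least one, obtained by iterating "replace a generator by a shorter one whenever possible" — a process that terminates since lengths decrease), and if $X^*$ and $U^*$ were two distinct such monoids then $M \subseteq X^* \cap U^*$, forcing $X^* \cap U^*$ to have rank $2$, contradicting Theorem~\ref{th_case_2} (which gives rank $\le 1$ for a proper intersection, hence $X^* = U^*$). Thus the $2$-maximal monoid containing $M$ is unique, which is exactly the uniqueness of the primitive root for sets of rank $2$.
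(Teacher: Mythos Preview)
Your overall strategy (use Tilson to get a free basis $W$ for $X^*\cap U^*$, invoke Karhum\"aki to restrict $|W|$, then eliminate the bad cases) is different from the paper's, which never performs this trichotomy: the paper works directly with the Graph Lemma, showing that any two distinct generators $z,z'$ of $X^*\cap U^*$ would, via a carefully chosen auxiliary word $t$, force the dependency graph of $\{x,y,u,v,t\}$ to have at most two connected components, contradicting $2$-maximality. That argument handles the ``two generators'' and ``infinitely many generators'' situations simultaneously and never needs~\cite{Kar84}.

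Your proposal, by contrast, has two genuine gaps.

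\emph{The $|W|=2$ case is argued incorrectly.} You write: ``$W^*$ is a submonoid of rank $2$ contained in the $2$-maximal monoid $X^*$, so by $2$-maximality $W^*=X^*$.'' This inverts the definition. A monoid $X^*$ being $2$-maximal means $X^*$ is not contained in any strictly larger member of $\mathcal{M}_2$; it says nothing about rank-$2$ submonoids \emph{inside} $X^*$. For instance $\{aa,bb\}^*\subsetneq\{a,b\}^*$ with $\{a,b\}^*$ $2$-maximal. So from $W^*\subseteq X^*$ you cannot conclude $W^*=X^*$. (Your side remark that $z_1z_2=z_2z_1$ would reduce to $|W|=1$ is moot: if $W$ is the basis of a free monoid of rank $2$, its two elements cannot commute.) Ruling out $|W|=2$ is precisely where the work lies, and it needs an argument of the Graph-Lemma type that the paper supplies.

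\emph{The infinite case is only a sketch.} You acknowledge this yourself: you propose ``a periodicity/Fine--Wilf type analysis'' or ``an appeal to~\cite{Kar84,Neraud93}'' but give no argument. Note that Example~\ref{no2max} does not supply a template in the direction you need: there the intersections are infinite because the monoids are not $2$-maximal, but you must prove the converse implication, and nothing in your outline does that.

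The final primitivity step (purity of $X^*\cap U^*$ via Proposition~\ref{prop:pure}) is correct and matches the paper.
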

\begin{proof}
  If $X \cap U =\{z\}$ then $X^*\cap U^*=z^*$. Indeed, if $y=v=z$ and $X^*\cap U^*\neq z^*$ we have the following graph $G_Z$ for $Z=\{x,u,z\}$:
  \begin{center}
\begin{tikzpicture}
  [scale=.8,auto=left,every node/.style={circle,fill=blue!20}]
  \node (nx) at (4,9)  {$x$};
  \node (ny) at (6,8)  {$z$};
  \node (nu) at (4,7.5) {$u$};
  \foreach \from/\to in {nx/nu}
    \draw (\from) -- (\to);
\end{tikzpicture}
\end{center}
 since $\{x,z\}$ and $\{u,z\}$ are bifix sets. Hence, by the Graph Lemma, $r_f(Z) \leq c(Z)=2$, contradicting the $2$-maximality of $X$ and $U$.

  If $X \cap U = \emptyset$, let us consider the set $Z=X \cup U$. We have that $r_f(Z) >2$ since $X^*$ and $U^*$ are $2$-maximal, and, by the Defect Theorem (Theorem~\ref{thm:defect}), $r_f(Z) < 4$ since $Z^*$ is not free (as $X^*\cap U^*$ contains a nonempty word). Hence, the free rank of $Z$ is equal to $3$.

  Let $z$ be a generator of $X^*\cap U^*$. So, $z=x_1x_2\cdots x_m=u_1u_2\cdots u_n$, with $m,n\geq 1$, $x_i \in X$ and $u_j \in U$. 
  Clearly, since $z$ is a generator, for every $p<m$ and $q<n$ one has $x_1x_2\cdots x_p\neq u_1u_2\cdots u_q$. Moreover, we can suppose, without loss of generality, that $x_1 =x$ and $u_1 = u$. We want to prove that $z$ is the unique generator of $X^* \cap U^*$. By contradiction, suppose that there exists another generator $z'\neq z$ of $X^*\cap U^*$, and let $z'=x'_1x'_2\cdots x'_r=u'_1u'_2.\cdots u'_s$.
  If $x'_1 \neq x_1 = x$, then $x'_1 = y$ and we have $xZ^* \cap uZ^* \neq \emptyset$  and $yZ^* \cap u'_1Z^* \neq \emptyset$. In both cases ($u'_1 = u$ or $u'_1 = v$), we have that the graph $G_Z$ has two edges, i.e., $c(Z) = 2$, which is impossible by the Graph Lemma. So $x_1 = x'_1 = x$. In the same way we prove that $u_1 = u'_1 = u$, and therefore in the graph $G_Z$ there is only one edge, namely the one joining $x$ and $u$.

\begin{center}
\begin{tikzpicture}
  [scale=.8,auto=left,every node/.style={circle,fill=blue!20}]
  \node (nx) at (4,9)  {$x$};
  \node (ny) at (6,9)  {$y$};
  \node (nu) at (4,7.5) {$u$};
  \node (nv) at (6,7.5)  {$v$};
  \foreach \from/\to in {nx/nu}
    \draw (\from) -- (\to);
\end{tikzpicture}
\end{center}

Let $h=\max\{i\mid x_j=x'_j\,\,  \forall j \leq i\}$ and $k=\max\{i\mid u_j=u'_j \,\, \forall j \leq i\}$. The hypothesis that $z\neq z'$ implies that $h<m$ and $k<n$. We show that this leads to a contradiction, and then we conclude that $z=z'$ is the unique generator of $X^*\cap U^*$.

Without loss of generality, we can suppose that $x_1x_2\cdots x_h$ is a prefix of $u_1u_2\cdots u_k$. Hence, there exists a nonempty word $t$ such that $x_1x_2\cdots x_ht=u_1u_2\cdots u_k$. By definition of $h$, $x_{h+1}\neq x'_{h+1}$, and we can suppose that $x_{h+1}=x$ and $x'_{h+1}=y$. Then,
\begin{equation*}
\begin{split}
t u_{k+1}\cdots u_n & = x_{h+1}\cdots x_m= x\cdots x_m\\
t u'_{k+1}\cdots u'_s& = x'_{h+1}\cdots x'_r= y\cdots x'_r.
\end{split}
\end{equation*}

Set $Z_t=X\cup U\cup \{t\}$. We have
\begin{equation*}
\begin{split}
t Z_t^*\cap xZ_t^* & \neq \emptyset\\
t Z_t^*\cap yZ_t^* & \neq \emptyset.
\end{split}
\end{equation*}
Thus, the graph $G_{Z_t}$ contains the edges depicted in figure:

\begin{center}
\begin{tikzpicture}
  [scale=.8,auto=left,every node/.style={circle,fill=blue!20}]
  \node (nt) at (5,10.5) {$t$};
  \node (nx) at (4,9)  {$x$};
  \node (ny) at (6,9)  {$y$};
  \node (nu) at (4,7.5) {$u$};
  \node (nv) at (6,7.5)  {$v$};
  \foreach \from/\to in {nt/nx,nt/ny,nx/nu}
    \draw (\from) -- (\to);
\end{tikzpicture}
\end{center}

By the Graph Lemma, then, the free rank of $Z_t$ is at most $2$, and this contradicts the $2$-maximality of $X^*$ and $U^*$.

Finally, let us prove that $z$ is primitive. Since $X^*$ and $Y^*$ are $2$-maximal, by Proposition \ref{prop:pure} they are both pure, hence also their intersection $z^*$ is pure. But it is immediate that $z^*$ is pure if and only if $z$ is primitive.
\qed
\end{proof}

\begin{example}\label{exz}
Consider the two $2$-maximal monoids $\{abcab,cb\}^*$ and $\{abc,bcb\}^*$. Their intersection is $\{abcabcbcb\}^*$. The intersection of $\{a,bc\}^*$ and $\{a,cb\}^*$ is $a^*$.
\end{example}

We have shown that the intersection of two $2$-maximal submonoids is generated by at most one element. Moreover, we know that the intersection of two 1-maximal submonoids is the empty word, i.e., it is generated by zero elements. Thus, it is natural to ask if in general, for every $k \geq 1$, the intersection of two $k$-maximal submonoids is generated by at most $k-1$ elements. The following examples, suggested to us by \v{S}t\v{e}p\'an Holub, provide a negative answer to this question.

\begin{example}\label{controes}
The intersection of the two $3$-maximal monoids $\{abc,dc,bab\}^*$ and $\{ab,cb,cd\}^*$ is infinitely generated by  $abc(dc)^*bab$. The intersection of the two $4$-maximal monoids $\{a,b,cd,ce\}^*$ and $\{ac,bc,da,ea\}^*$ is $\{acea,bcea,acda,bcda\}^*$.
\end{example}

Thus, our Theorem \ref{th_case_2} is specific for rank $2$ and cannot be generalized to larger $k$.

For an upper bound on the length of the word that generates the intersection of two $2$-maximal submonoids, we have the following proposition.

\begin{proposition}\label{zbound}
With the hypotheses of Theorem \ref{th_case_2}, $$\vert z \vert < (\vert x \vert + \vert y \vert)(\vert u \vert + \vert v \vert).$$
\end{proposition}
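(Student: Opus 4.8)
The plan is to realize $z$ as the label of a path in a product automaton that is simple except for its endpoints, so that its length is bounded by the number of states. First I would use Theorem~\ref{th_case_2} to write $X^*\cap U^*=z^*$ with $z\in A^+$ primitive, and Proposition~\ref{prop:bifix} to note that $X$ and $U$ are bifix codes, in particular prefix codes. Hence, as recalled in Section~\ref{sec:prem}, there is a DFA $\mathcal A_X$ recognizing $X^*$ whose initial state is its unique final state, call it $\iota_X$, and whose states (apart from a possible sink) correspond to the proper prefixes of $x$ and of $y$; there are at most $|x|+|y|-1$ of them. Likewise take a DFA $\mathcal A_U$ for $U^*$ with at most $|u|+|v|-1$ non-sink states and initial-and-final state $\iota_U$. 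Form the product automaton $\mathcal A=\mathcal A_X\times\mathcal A_U$: it recognizes $X^*\cap U^*=z^*$, has initial-and-final state $\iota=(\iota_X,\iota_U)$, and every state lying on an accepting path is a pair of non-sink states, so there are at most $(|x|+|y|-1)(|u|+|v|-1)$ such states.

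Next I would read $z=z_1z_2\cdots z_{|z|}$ (with $z_i\in A$) in $\mathcal A$, producing an accepting path $\iota=s_0\xrightarrow{z_1}s_1\xrightarrow{z_2}\cdots\xrightarrow{z_{|z|}}s_{|z|}=\iota$ whose states all lie among the at most $(|x|+|y|-1)(|u|+|v|-1)$ on-path states of $\mathcal A$. The heart of the argument is the claim that $s_0,s_1,\dots,s_{|z|-1}$ are pairwise distinct. Suppose $s_i=s_j$ with $0\le i<j\le|z|$. By determinism, deleting the loop between positions $i$ and $j$ yields an accepting path of $\mathcal A$ labelled $z_1\cdots z_i\,z_{j+1}\cdots z_{|z|}$, so this word lies in $X^*\cap U^*=z^*$. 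Since $z^*=\{\epsilon,z,z^2,\dots\}$ and the word has length $|z|-(j-i)<|z|$, it must be $\epsilon$, forcing $i=0$ and $j=|z|$. Thus the only coincidence among $s_0,\dots,s_{|z|}$ is $s_0=s_{|z|}=\iota$, which proves the claim.

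From the claim, $|z|\le(|x|+|y|-1)(|u|+|v|-1)$, and since $|x|+|y|\ge2$ and $|u|+|v|\ge2$ we have $(|x|+|y|-1)(|u|+|v|-1)=(|x|+|y|)(|u|+|v|)-(|x|+|y|)-(|u|+|v|)+1<(|x|+|y|)(|u|+|v|)$, which is the desired inequality. The one genuinely delicate step is the loop-removal argument showing the path is simple up to its endpoints; everything else is routine bookkeeping about automata of prefix codes and about product automata. I also note that the argument can be phrased without automata: attach to each prefix $w$ of $z$ the pair $(\xi,\upsilon)$, where $\xi$ is the progress of $w$ into the current block of the (unique, since $X$ is a code) factorization of $z$ over $X$ and $\upsilon$ is the analogous progress for the factorization over $U$; one checks that $\xi$ ranges over proper prefixes of $x,y$, that $\upsilon$ ranges over proper prefixes of $u,v$, and that $(\xi,\upsilon)$ equals $(\epsilon,\epsilon)$ only for $w=\epsilon$ and $w=z$, by the same ``cut-and-paste yields a shorter element of $z^*$'' reasoning.
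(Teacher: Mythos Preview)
Your argument is correct and follows essentially the same route as the paper: both build the product of DFAs for $X^*$ and $U^*$ and bound $|z|$ by the number of on-path states, using that $X^*\cap U^*=z^*$. Your loop-removal step is simply an explicit justification of what the paper asserts in one sentence (``$\mathcal A$ is composed by only one cycle, labeled by $z$''), and along the way you obtain the slightly sharper bound $|z|\le(|x|+|y|-1)(|u|+|v|-1)$.
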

\begin{proof}
Let $\mathcal{A}_X$ (resp. $\mathcal{A}_U$) be the minimal DFA recognizing $X^*$ (resp. $U^*$) and $Q_X$ (resp. $Q_U$) its set of states. Since $X$ and $U$ are bifix codes, we have $\vert Q_X \vert < \vert x \vert + \vert y \vert$ and $\vert Q_U \vert < \vert u \vert + \vert v \vert$. Then the automaton $\mathcal{A}$ recognizing $X^* \cap U^*$ has a set of states $Q$ such that $\vert Q \vert < (\vert x \vert + \vert y \vert)(\vert u \vert + \vert v \vert)$. By Theorem \ref{th_case_2}, $\mathcal{A}$ is composed by only one cycle, labeled by $z$. Thus, $\vert z \vert < (\vert x \vert + \vert y \vert)(\vert u \vert + \vert v \vert).$
\qed
\end{proof}

For all our examples, the bound is much smaller than the previous one, hence we pose the following

\begin{problem}\label{prob2}
Find a tight bound on the length of $z$ in terms of the lengths of $x$ and $y$ and $u$ and $v$.
\end{problem}

\section{Primitive Sets}

We now show how the previous results can be interpreted in the terminology of combinatorics on words.

Let us start with the remark that a word $x\in A^+$ is primitive if and only if $$x \in u^*, u \in A^+ \Rightarrow x=u.$$
With our definition of maximality, we have that a word $x\in A^+$ is primitive if and only if the monoid $x^*$ is $1$-maximal.
Inspired by this observation, we give the following definition.

\begin{definition}
	A finite set $X \subseteq A^*$  is primitive if it is the basis of a $|X|$-maximal submonoid. 
\end{definition}

\begin{remark}
The definition of primitive set does not coincide with that of elementary set. A set $X$ is said to be elementary if $r(X)=\vert X \vert $. If $X$ is primitive, then $r(X)=\vert X \vert$, i.e., it is elementary. But there exist elementary sets that are not primitive. For instance, the set $\{aa,bca\}$ is elementary, but it is not primitive since $\{aa,bca\}^{*} \subseteq \{a,bc\}^*$. The set $\{a,bc\}$, instead, is primitive.
\end{remark}

From the definition of primitive set, we have that for every set $X$ there exists a primitive set $Y$ such that $X \subseteq Y^*$. The set $Y$ is therefore called a \emph{primitive root} of $X$. However, the primitive root of a set is not, in general, unique. Consider for instance the set $X = \{abcbab,abcdcbab,abcdcdcbab\}$. It has rank $3$, hence it is elementary, yet it is not primitive. Indeed, $X \subseteq \{ab,cb,cd\}^*$. The set $\{ab,cb,cd\}$ is primitive, and it is a primitive root of $X$. However, it is not the only primitive root of $X$: the set $\{abc,dc,bab\}$ is primitive and $X \subseteq \{abc,dc,bab\}^*$, hence $\{abc,dc,bab\}$ is another primitive root of $X$. In the special case of sets of rank $1$, clearly these always have a unique primitive root. For instance, the primitive root of the set $\{abab,abababab\}$ is the set $\{ab\}$.

However, as a consequence of Theorem \ref{th_case_2} we have the following result.

\begin{theorem}\label{thm:pair}
A set $X$ of rank $2$ has a unique primitive root.
\end{theorem}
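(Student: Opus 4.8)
The plan is to reduce the statement about primitive roots of rank-$2$ sets directly to Theorem~\ref{th_case_2}. Let $X$ be a set with $r(X) = 2$, and suppose $Y_1 = \{u_1,v_1\}$ and $Y_2 = \{u_2,v_2\}$ are two primitive roots of $X$, so that $X \subseteq Y_1^*$ and $X \subseteq Y_2^*$. Since each $Y_i$ is a primitive set of cardinality $2$, the submonoid $Y_i^*$ is $2$-maximal by definition. The key observation is that $X \subseteq Y_1^* \cap Y_2^*$, and since $r(X) = 2$, the set $X$ contains at least two words $x,y$ with $xy \neq yx$ (otherwise $X$ would be contained in $w^*$ for a single word $w$, forcing $r(X) \leq 1$). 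Hence $Y_1^* \cap Y_2^*$ contains two non-commuting words, so in particular it is not of the form $z^*$ for a single word $z$, nor is it $\{\epsilon\}$.

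Now I apply Theorem~\ref{th_case_2} to the two $2$-maximal submonoids $Y_1^*$ and $Y_2^*$: it says that $Y_1^* \cap Y_2^*$ is either $\{\epsilon\}$ or of the form $z^*$ for a primitive word $z$. Both possibilities are excluded by the previous paragraph. The only way out of this apparent contradiction is that the hypothesis of Theorem~\ref{th_case_2} — that $Y_1^*$ and $Y_2^*$ are \emph{distinct} $2$-maximal submonoids, equivalently that one is not contained in the other — must fail; more precisely, since $Y_1^* \cap Y_2^*$ is neither $\{\epsilon\}$ nor singly generated, the theorem forces $Y_1^* = Y_2^*$ (if they were distinct, one could not contain the other by $2$-maximality, and then their intersection would fall under the theorem's conclusion). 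From $Y_1^* = Y_2^*$ and the uniqueness of the minimal generating set of a submonoid, we conclude $Y_1 = Y_2$, which establishes uniqueness of the primitive root.

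The one point that needs a little care — and which I expect to be the main (though minor) obstacle — is the logical bridge in the last step: Theorem~\ref{th_case_2} as stated starts from ``$X^*$ and $U^*$ \emph{are} two $2$-maximal submonoids'' and concludes about the intersection, but to derive a contradiction I must be sure I am not in the degenerate case where the two submonoids coincide. I would handle this cleanly by casing on whether $Y_1^* = Y_2^*$: if they are equal we are done immediately; if not, then by $2$-maximality neither contains the other properly, so they are genuinely distinct $2$-maximal submonoids and Theorem~\ref{th_case_2} applies verbatim, yielding $Y_1^* \cap Y_2^* \in \{\{\epsilon\}\} \cup \{z^* : z \text{ primitive}\}$, contradicting the presence of the non-commuting pair $x,y \in X \subseteq Y_1^* \cap Y_2^*$. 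Finally, existence of \emph{some} primitive root is free from the definition (iteratively enlarge any generating set of size $2$ inside a $2$-maximal submonoid), so the theorem follows. It is also worth remarking that this argument shows the stronger statement already noted after Theorem~\ref{th_case_2}: every submonoid in $\mathcal{M}_2$ generated by two non-commuting words is contained in a \emph{unique} $2$-maximal submonoid.
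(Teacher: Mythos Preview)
Your argument is correct and is essentially the paper's own proof: assume two primitive roots, apply Theorem~\ref{th_case_2} to the two $2$-maximal submonoids, and contradict $r(X)=2$ via the fact that the intersection would be cyclic. Your extra care about the degenerate case $Y_1^*=Y_2^*$ and the explicit use of a non-commuting pair in $X$ are just a more detailed spelling-out of what the paper leaves implicit.
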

\begin{proof}
	  If $\{u_1, u_2\}$ and $\{v_1, v_2\}$ are two primitive roots of  $X$  then  $X^* \subseteq  \{u_1, u_2\}^* \cap \{v_1, v_2\}^*$. Hence, by Theorem \ref{th_case_2}, $X \subseteq \{z\}^*$, for some primitive word $z$,  i.e. $r(X)= 1$, a contradiction.
	\qed
	\end{proof}

In what follows, we find convenient call a primitive set of cardinality $2$ a \emph{primitive pair}.

\begin{example}
The words $abca$ and $bc$ are primitive words, yet the pair $\{abca,bc\}$ is not a primitive pair, since $\{abca,bc\}^* \subseteq \{a, bc\}^*$, hence $\{abca,bc\}^*$ is not $2$-maximal.
The pair $\{abcabc, bcabca\}$ can be written as concatenations of copies of both  $\{abca,bc\}$ and $\{a, bcabc\}$.
However, there is a unique way to decompose each word of the pair $\{abcabc, bcabca\}$ as a concatenation of words of a primitive pair, and this pair is $\{a, bc\}$. Indeed, the primitive root of $\{abcabc, bcabca\}$ is $\{a, bc\}$.
\end{example}

As it is well known, a primitive word $x$ does not have internal occurrences in $xx$. The next result, whose proof is omitted for brevity, provides a similar property in the case of a primitive set of two words.

 \begin{theorem}\label{thm:cube}
 Let $\{x,y\}$ be a primitive pair. Then neither $xy$ nor $yx$ occurs internally in a word of $\{x,y\}^3$.
 \end{theorem}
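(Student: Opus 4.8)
The plan is to argue by contradiction, using the fact (Theorem~\ref{thm:cube} to be proved) that $xy$ occurs internally in some word of $\{x,y\}^3$. Without loss of generality assume $xy$ occurs internally in $w \in \{x,y\}^3$; write $w = w_1 w_2 w_3$ with each $w_i \in \{x,y\}$, so $w$ has length $|w| \le 3(|x|+|y|)$ and admits the $\{x,y\}$-factorisation $w_1 w_2 w_3$. An internal occurrence of $xy$ means $w = p\,(xy)\,s$ with $p,s \in A^+$. The key point is that $p$ is a proper nonempty prefix and $s$ a proper nonempty suffix of $w$, and concatenating the $\{x,y\}$-factorisations of the three pieces that overlap will produce, after relocating cut points, a second $\{x,y\}$-factorisation of $w$ that does not coincide with $w_1w_2w_3$ at the position where the internal $xy$ begins.

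First I would set up the combinatorial bookkeeping: an internal occurrence of $xy$ gives a word $t$ (a nonempty proper suffix of one of $x$ or $y$, or a nonempty proper prefix) such that both $t$ and a shifted version of $x$ or $y$ lead to edges in a dependency graph. Concretely, the plan mirrors the proof of Theorem~\ref{th_case_2}: from the competing factorisations of $w$ one extracts a ``defect'' word $t$ and forms $Z_t = \{x,y\} \cup \{t\}$, showing that $G_{Z_t}$ contains enough edges (e.g. $t$ adjacent to both $x$ and $y$, forcing $c(Z_t) \le 2$) to apply the Graph Lemma and conclude $r_f(\{x,y\}\cup\{t\}) \le 2$. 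Since $\{x,y\}$ is a primitive pair, $\{x,y\}^*$ is $2$-maximal, so any superset of rank $\le 2$ must already be contained in $\{x,y\}^*$; this forces $t \in \{x,y\}^*$, and then a length/overlap argument shows $t$ is empty or equal to $x$ or $y$, contradicting that the occurrence of $xy$ was internal (it would merely be one of the original factor boundaries).

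Before invoking the graph machinery I would dispose of the degenerate cases by hand: if $|xy| \ge |w|$ there is no room for an internal occurrence; if $p$ (or $s$) is so short that the internal $xy$ almost fills $w$, one gets a direct word equation $p\,xy = w_1 w_2 w_3$ with $|p|$ small, which either puts $p$ inside $\langle x,y\rangle$ immediately or yields a nontrivial relation among $x,y$ forcing $xy = yx$ (impossible, since a primitive pair cannot commute — $\{x,y\}^*$ would be contained in a single generator's monoid). The same remark handles the case where an occurrence of $yx$ is analysed, by symmetry (swap the roles of $x$ and $y$ throughout). It is worth noting that $w \in \{x,y\}^3$ rather than $\{x,y\}^2$ is needed precisely to give the overlapping factor $t$ room on both sides; with cubes, at least one of the three factors is entirely swallowed by the $p$–$s$ split, which is what makes the defect argument go through.

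The hard part will be the bookkeeping that turns ``$xy$ occurs internally in $w_1w_2w_3$'' into the precise statement that $G_{Z_t}$ has at most two connected components while $\{x,y\}$ itself has two isolated vertices — i.e. ensuring the new vertex $t$ really connects to both $x$ and $y$ (or that one recovers two independent edges among $\{x,y,t\}$), so that adding $t$ strictly drops the free rank. One must check carefully that $t$ is genuinely new (not equal to $x$ or $y$) in the situation being contradicted, and handle the sub-case where the internal occurrence straddles only two of the three factors versus all three. Once the graph is in place, the Graph Lemma plus $2$-maximality of $\{x,y\}^*$ (hence $t\in\{x,y\}^*$) closes the argument exactly as in Theorem~\ref{th_case_2}; the remaining length comparison $|t| < |x|+|y|$ combined with $t\in\{x,y\}^*$ pins down $t\in\{\epsilon,x,y\}$ and yields the contradiction.
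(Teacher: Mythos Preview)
The paper explicitly omits the proof of this theorem (``whose proof is omitted for brevity''), so there is nothing to compare your proposal against directly. I can only assess whether your plan is sound.

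Your overall strategy---extract a defect word $t$, apply the Graph Lemma to $Z_t=\{x,y,t\}$ to force $r_f(Z_t)\le 2$, then use $2$-maximality to conclude $t\in\{x,y\}^*$ and derive a contradiction from $|t|<\max(|x|,|y|)$ together with the bifix property---is reasonable and does parallel the proof of Theorem~\ref{th_case_2}. The endgame (once you know $t\in\{x,y\}^*$) works: $t$ would have to be a power of the shorter generator, hence a proper suffix or prefix of $x$ or $y$, contradicting bifix.

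There is, however, a genuine gap in the setup. In Theorem~\ref{th_case_2} the two competing factorisations of $z$ are \emph{both} over $X\cup U$, so the Graph Lemma applies directly. Here you write $w=w_1w_2w_3=p\,(xy)\,s$, but only the left-hand side is a $\{x,y\}^*$-factorisation; the words $p$ and $s$ are arbitrary and need not lie in $\{x,y,t\}^*$. Your sentence about ``a second $\{x,y\}$-factorisation of $w$'' is not right: since $\{x,y\}$ is a (bifix) code, $w$ has exactly one $\{x,y\}$-factorisation, and the internal occurrence of $xy$ is merely an occurrence as a factor, not an alternative factorisation. Consequently the edges you need in $G_{Z_t}$---witnessed by equalities of the form $t\alpha=x\beta$ with $\alpha,\beta\in Z_t^*$---do not come for free from $t w_2 w_3 = xys$, because $s\notin Z_t^*$ in general.

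To repair this you must work harder with the positions: locate where each of the three boundaries $|p|$, $|px|$, $|pxy|$ falls relative to $|w_1|$ and $|w_1w_2|$ (you correctly note that the bifix property rules out the boundary cases), and in each sub-case isolate a word equation that lives entirely inside $\{x,y,t\}^*$ (typically by also naming the overhang at the \emph{right} end of $xy$ and adding it to $Z_t$, or by showing that in the given sub-case the right end of $xy$ lands exactly on a boundary so that $s\in\{x,y\}^*$). Only then does the Graph Lemma bite. This case analysis is exactly the ``hard bookkeeping'' you allude to, but as written your proposal does not indicate how it is carried out, and the analogy with Theorem~\ref{th_case_2} breaks precisely at this point.
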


\begin{example}
 Let $x=abcabca$, $y=bcaabcabc$. Then $xy$ has an internal occurrence in $yxx$, yet $\{x,y\}\subset \{a,bc\}^*$. This example shows that the hypothesis that $\{x,y\}$ is primitive cannot be replaced by simply requiring that $x$ and $y$ are primitive words.
\end{example}

 Differently to the case of a single primitive word, the converse of Theorem~\ref{thm:cube} does not hold. For example, $\{abcaa, bc\}$ is not primitive ($\{abcaa,bc\}^* \subseteq \{a, bc\}^*$), yet neither $abcaabc$ nor $bcabcaa$ occurs internally in a word of $\{x,y\}^3$.

\medskip

\section{Binary Root of a Single Primitive Word}

In this section, we derive  some consequences on the combinatorics of a {\em single} word. In particular, we introduce the notion of binary root of a primitive word, and we show how this notion may be useful to reveal some hidden repetitive structure in the word.

Let $w$ be a nonempty word. If $w$ is not primitive, then it can be written in a unique way as a concatenation of copies of a primitive word $r$, called the {\em root} of $w$. 
However, if $w$ is primitive, one can ask whether it can be written as a concatenation of copies of two words $x$ and $y$. If we further require that $\{x, y\}$ is a primitive set, then we call $\{x, y\}$ a {\em binary root} of the word $w$. Note that the binary root of a single word is not, in general, unique. For instance, for $w=abcbac$ we have $w=ab \cdot cbac= abcb \cdot ac$ and $\{ab, cbac\}$ and $\{abcb, ac\}$ are both primitive pairs, i.e., they are both binary roots of $w$. However, if we additionally require that the size $\vert x \vert + \vert y \vert$ of the binary root $\{x,y\}$ is ``short'' with respect to the length of $w$, then we obtain again the uniqueness. This is shown  in the next theorem.

\begin{theorem}\label{thm:squareroot}
	Let $w$ be a primitive word. Then $w$ has at most one binary root $\{x,y\}$ such that $|x|+|y|<\sqrt[]{|w|}$.
\end{theorem}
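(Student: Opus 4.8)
The plan is to argue by contradiction using Theorem~\ref{th_case_2} together with the length bound of Proposition~\ref{zbound}. Suppose $w$ has two distinct binary roots $\{x,y\}$ and $\{u,v\}$, both with size strictly less than $\sqrt{|w|}$. Since $w$ is a concatenation of copies of $x$ and $y$, we have $w\in\{x,y\}^*$, and likewise $w\in\{u,v\}^*$. Both $\{x,y\}$ and $\{u,v\}$ are primitive pairs, so $\{x,y\}^*$ and $\{u,v\}^*$ are $2$-maximal submonoids, and their intersection contains the nonempty word $w$, hence is not reduced to $\{\epsilon\}$.

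First I would apply Theorem~\ref{th_case_2} to conclude that $\{x,y\}^*\cap\{u,v\}^*=z^*$ for a single primitive word $z$, and that $w\in z^*$. Since $w$ is primitive, this forces $w=z$. Now I invoke Proposition~\ref{zbound}, which gives
$$|w|=|z|<(|x|+|y|)(|u|+|v|).$$
But by hypothesis $|x|+|y|<\sqrt{|w|}$ and $|u|+|v|<\sqrt{|w|}$, so the right-hand side is strictly less than $\sqrt{|w|}\cdot\sqrt{|w|}=|w|$, yielding $|w|<|w|$, a contradiction. Hence $w$ has at most one binary root of the prescribed size.

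The only subtlety to check carefully is the degenerate case where the two candidate roots, although distinct as sets, generate the same submonoid --- but then one would still conclude $\{x,y\}^*=\{u,v\}^*$, and since both are $2$-maximal with minimal generating set equal to their unique basis, we would get $\{x,y\}=\{u,v\}$, contradicting distinctness; so this case does not actually arise, or rather it immediately gives the desired uniqueness. One should also note the trivial possibility that one of the roots has rank $1$ (i.e. $x$ and $y$ commute), but a primitive pair by definition has rank $2$, so this is excluded. I expect no real obstacle here: the argument is essentially a one-line consequence of the two cited results, and the main point is simply to recognize that the square-root threshold is precisely calibrated to defeat the product bound $(|x|+|y|)(|u|+|v|)$ of Proposition~\ref{zbound}.
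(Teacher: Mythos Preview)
Your argument is correct and matches the paper's own proof essentially line for line: contradiction via Theorem~\ref{th_case_2} to get $w=z$, then Proposition~\ref{zbound} to obtain $|w|<(|x|+|y|)(|u|+|v|)<|w|$. The extra remarks you make about the degenerate cases are sound but unnecessary, since distinct primitive pairs automatically generate distinct $2$-maximal submonoids (the basis of a submonoid is unique).
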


\begin{proof}
	Suppose by contradiction there exists another binary root $\{u,v\}$ of $w$ with $|u|+|v|<\sqrt[]{|w|}$. Take $X=\{x,y\}$ and $U=\{u,v\}$. By Theorem \ref{th_case_2}, there exists a primitive word $z$ and an integer $n$ such that $w=z^n$. As $w$ is primitive, $w=z$ and $n=1$. By Proposition~\ref{zbound},  we have that $|w|<(|x|+|y|)(|u|+|v|)<\sqrt[]{|w|} \cdot \sqrt[]{|w|} = |w|$, a contradiction.
	\qed
\end{proof}

The following example shows a word $w$ that has binary roots of different sizes, but only one of size less than $\sqrt{|w|}$.

\begin{example}
	Consider the primitive word $w=abcaabcabc$ of length $10$. The pair $\{a,bc\}$ is the only binary root of $w$ of size smaller than $\sqrt{|w|}$. 
\end{example}

Asking for a tight bound in the statement of Theorem~\ref{thm:squareroot} is of course a problem intimately related to Open Problem~\ref{prob2}.

We observe that both the classical notion of root and that of binary root are related to some  repetitive structure inside the word. If $w$ is not primitive, the length of its root reveals its repetitive structure in the sense that, if such a length is much smaller than the length of $w$, then the word $w$ can be considered highly repetitive. If $w$ is primitive, the size of its binary root plays an analogous role. This could be illustrated by the following (negative) example. Consider a word $w$ over the alphabet $A$ such that all the letters of $w$ are distinct, so that $|w|=|A|$. This word is not repetitive at all, and it has $|w|-1$ different binary roots $\{x,y\}$, all of size $|w|$, corresponding to the trivial factorizations $w=xy$. Thus, the absence of repetitions in a word is related to the large size of its binary roots. On the contrary, the existence in a word $w$ of a ``short'' (with respect to $|w|$) binary root corresponds to the existence of some hidden repetitive structure in the word. This approach generalizes some already-considered notions of hidden repetitions (cf.~\cite{GMN13,G13,GaMa19}).

We think that the notion of a binary root can be further explored and may have applications, e.g., in the area of  string algorithms.

Notice that the minimal length of a binary root (intended as the sum of the lengths of the two components of the pair) is affected by the combinatorial properties of the word. For example, if $w$ is a square-free word, then $w$ cannot have a binary root $\{x,y\}$ such that $|x|+|y|<|w|/4$, since otherwise $w$ would contain a square ($xx$, $yy$, $xyxy$ or $yxyx$).
The previous remark suggests a possible link between the notion of a binary root and the classical notion of \emph{binary pattern}, which has been deeply investigated in combinatorics on words and fully classified by J.~Cassaigne~\cite{Cas94} (see also~\cite[Chap.~3]{Lothaire2} for a survey). 

\section{Connections with Pseudo-Primitive Words}\label{sec:pseudo}

We now show how the notion of a primitive pair can be seen as a generalization of the notion of a pseudo-primitive word, with respect to an involutive (anti-)morphism $\theta$, as introduced in~\cite{CKS}.

A map $\theta: A^* \rightarrow A^*$ is a {\em morphism} (resp. {\em antimorphism}) if for each $u, v \in A^*$, $\theta(uv)=\theta(u)\theta(v)$ (resp. $\theta(uv)=\theta(v)\theta(u)$) --- $\theta$ is an {\em involution} if $\theta(\theta(a))=a$ for every $a \in A$. 

Let $\theta$ be an involutive morphism or antimorphism other than the identity function. We say that a word $w \in A^*$ is a {\em $\theta$-power} of $t$ if $w \in t\{t, \theta(t)\}^*$. A word $w$ is {\em $\theta$-primitive} if there exists no nonempty word $t$  such that $w$ is a $\theta$-power of $t$ and $\vert w \vert > \vert t \vert$. 

\begin{theorem}[\cite{CKS}]\label{kari}
	Given a word $w \in A^*$ and an involutive (anti-)morphism  $\theta$, there exists a unique $\theta$-primitive word $u \in A^*$ such hat $w$ is a $\theta$-power of $u$. The word $u$ is called the {\em $\theta$-root} of $w$.
\end{theorem}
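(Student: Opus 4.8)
The plan is to reduce Theorem~\ref{kari} to the earlier results about primitive pairs, specifically to the uniqueness of the primitive root of a rank-$2$ set (Theorem~\ref{thm:pair}) and hence ultimately to Theorem~\ref{th_case_2}. The bridge is the claim, mentioned in the introduction, that a word $w$ is $\theta$-primitive if and only if $\{w,\theta(w)\}$ is a primitive pair (for this to make sense we must first handle the degenerate case $w=\theta(w)$, or more generally the case where $w$ and $\theta(w)$ commute; in that situation $\{w,\theta(w)\}$ has rank $1$ and one argues separately that the $\theta$-root is the ordinary primitive root of $w$).

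First I would fix $w$ and consider the set $X=\{w,\theta(w)\}$. If $w$ and $\theta(w)$ commute, then both are powers of a common primitive word $p$; one checks that $p$ is $\theta$-primitive (using that $\theta$ is an involutive (anti-)morphism, $\theta(p)$ is also a power of $p$ of the same length as $p$, hence $\theta(p)=p$ up to the ordinary-root argument), and that $p$ is the unique such word, so Theorem~\ref{kari} holds in this case. Otherwise $X$ has rank $2$, so by Theorem~\ref{thm:pair} it has a unique primitive root $\{u_1,u_2\}$. The key observation is that this root is stable under $\theta$: since $\theta$ is an involutive (anti-)morphism, $\theta(\{u_1,u_2\})=\{\theta(u_1),\theta(u_2)\}$ is again a primitive pair, and $X=\theta(X)\subseteq \theta(\{u_1,u_2\})^*$, so by uniqueness $\{\theta(u_1),\theta(u_2)\}=\{u_1,u_2\}$. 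Thus either $\theta$ fixes each $u_i$, or $\theta$ swaps $u_1$ and $u_2$. In the first case $u_1,u_2$ are both $\theta$-fixed words whose concatenations give $w$; in the second case, writing $u=u_1$ and $\theta(u)=u_2$, we get that $w\in\{u,\theta(u)\}^*$ with $u$ $\theta$-primitive (a $\theta$-primitive word $u$ is precisely one for which $\{u,\theta(u)\}$ is a primitive pair, or $u$ is primitive and $\theta$-fixed).

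Next I would nail down the precise dictionary: $w$ is a $\theta$-power of $t$ iff $w\in t\{t,\theta(t)\}^*$, and $t$ is $\theta$-primitive iff $\{t,\theta(t)\}$ is a primitive pair (when $t\neq\theta(t)$) or $t$ is an ordinary primitive word fixed by $\theta$ (when $t=\theta(t)$). Existence of a $\theta$-root of $w$ then follows by taking the primitive root $\{u_1,u_2\}$ of $\{w,\theta(w)\}$ and setting $u$ to be whichever $u_i$ makes $w$ a $\theta$-power of it (the $\theta$-stability of the pair guarantees that one of the two choices works, possibly after checking which letter-frequency/prefix conditions hold). For uniqueness, suppose $w$ is a $\theta$-power of both $t$ and $t'$ with $t,t'$ $\theta$-primitive. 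Then $\{t,\theta(t)\}$ and $\{t',\theta(t')\}$ are both primitive pairs (or rank-$1$ sets) with $\{w,\theta(w)\}\subseteq \{t,\theta(t)\}^*\cap\{t',\theta(t')\}^*$; by Theorem~\ref{thm:pair} (using Remark~\ref{rank2} to pass between the rank-$1$ and rank-$2$ cases as needed) the primitive roots coincide, so $\{t,\theta(t)\}=\{t',\theta(t')\}$, and then the normalization built into the definition of $\theta$-root (e.g.\ that $w$ starts with $t$, or a canonical choice between $t$ and $\theta(t)$) forces $t=t'$.

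The main obstacle I anticipate is bookkeeping around the involution rather than anything deep: carefully separating the case $w=\theta(w)$ (and more subtly the case where $w\neq\theta(w)$ but they commute), and making sure that the "unique primitive root is $\theta$-stable" step correctly yields that $\theta$ \emph{swaps} the two generators in the interesting case (so that $w$ genuinely is a $\theta$-power of a single word, not just a product of two unrelated $\theta$-fixed words). One must also verify that when $\theta$ swaps $u_1$ and $u_2$, the word $u_1$ really is $\theta$-primitive in the sense of~\cite{CKS}, i.e.\ that it admits no shorter word $s$ with $u_1\in s\{s,\theta(s)\}^*$; this again reduces, via Theorem~\ref{th_case_2}, to the fact that $\{u_1,u_2\}$ being a primitive pair means $\{u_1,u_2\}^*$ is $2$-maximal and hence cannot sit inside a smaller-rank or another $2$-generated submonoid. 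Once these case distinctions are handled cleanly, the theorem is an essentially immediate corollary of Theorem~\ref{thm:pair}.
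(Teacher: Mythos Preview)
Your approach is essentially the paper's own derivation: reduce to the primitive root of $\{w,\theta(w)\}$, use Proposition~\ref{prop_invariant2} to get that this root is $\theta$-invariant, and use the dictionary ``$w$ is $\theta$-primitive $\Leftrightarrow$ $\{w,\theta(w)\}$ is a primitive pair'' (Proposition~\ref{prop:tetaprimitive}) together with Theorem~\ref{thm:pair}. For involutive \emph{morphisms} this is exactly what the paper does, and it works.

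However, the obstacle you flag at the end is not ``bookkeeping'' but a genuine gap in the antimorphism case, and the paper is explicit about this: it only claims to derive Theorem~\ref{kari} from Theorem~\ref{thm:pair} when $\theta$ is an involutive \emph{morphism}. For antimorphisms the equivalence of Proposition~\ref{prop:tetaprimitive} fails in one direction, and the paper gives a concrete counterexample: with $\theta$ the reversal on $\{a,b,c\}^*$, the word $w=abbaabbacbc$ is $\theta$-primitive, yet the primitive root of $\{w,\theta(w)\}$ is the pair of $\theta$-palindromes $\{abba,cbc\}$. In your framework this lands in the ``$\theta$ fixes each $u_i$'' branch, and then $w\in\{u_1,u_2\}^*$ does \emph{not} exhibit $w$ as a $\theta$-power of anything shorter than $w$; the $\theta$-root of $w$ is $w$ itself and is invisible to the primitive-pair machinery. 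So your plan cannot recover the antimorphism half of Theorem~\ref{kari}, and the paper does not pretend to either; for that case one must fall back on the original argument of~\cite{CKS}. What the paper extracts instead is the structural consequence (its final proposition) that a $\theta$-primitive word whose pair is not primitive must lie in $\{p,q\}^*$ for two $\theta$-palindromes $p,q$.
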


\begin{example}\label{ex_thetaroot}
	Let $\theta: \{a,b,c\}^* \rightarrow \{a,b,c\}^*$ the involutive morphism defined by $\theta(a)=b$, $\theta(b)=a$ and $\theta(c)=c$. The $\theta$-root of the word $abcabcbac$ is $abc$.
\end{example}

If $\theta$ is an involutive morphism, we show that Theorem \ref{kari} can be obtained as a consequence of Theorem \ref{thm:pair}. If $\theta$ is an involutive antimorphism, we obtain a slightly different formulation, from which we derive a new property of $\theta$-primitive words.

Given a morphism $\theta$ and a set $X \subseteq A^*$, $\theta(X)$ denotes the set $\{\theta(u) \mid u \in X\}$. We say that $X$ is {\em $\theta$-invariant} if $\theta(X) \subseteq X$.

We have the following propositions.

\begin{proposition}\label{prop_invariant2}
	Let $\theta$ be involutive. If $\{x, y\}$ is $\theta$-invariant, then so is its root.
\end{proposition}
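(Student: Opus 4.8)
The plan is to show that $\theta$ permutes the primitive roots of $\{x,y\}$ and then to invoke their uniqueness. First I would record the basic facts about $\theta$. Being an involutive (anti)morphism, $\theta$ is a bijection of $A^*$ onto itself with $\theta^{-1}=\theta$, and for every $X\subseteq A^*$ one has $\theta(X^*)=\theta(X)^*$: this is immediate when $\theta$ is a morphism, and it also holds when $\theta$ is an antimorphism, since $\theta(x_1\cdots x_k)=\theta(x_k)\cdots\theta(x_1)\in\theta(X)^*$ and conversely. The first step is then the reduction that the hypothesis $\theta(\{x,y\})\subseteq\{x,y\}$, together with the injectivity of $\theta$ and the finiteness of $\{x,y\}$, actually forces $\theta(\{x,y\})=\{x,y\}$, equivalently $\theta(\{x,y\}^*)=\{x,y\}^*$.

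The key step is to check that $\theta$ sends $k$-maximal submonoids to $k$-maximal submonoids and bases to bases. If $M\in\mathcal{M}_k$ has minimal generating set $Y$, then $\theta(M)=\theta(Y)^*$ is generated by the $|\theta(Y)|=|Y|\le k$ words of $\theta(Y)$, so $\theta(M)\in\mathcal{M}_k$; moreover the description $Y=(M\setminus\{\epsilon\})\setminus(M\setminus\{\epsilon\})^2$ of \eqref{base} is preserved by the bijective (anti)morphism $\theta$ (note that $\theta(S^2)=\theta(S)^2$ for a set $S$, in the antimorphic case as well), so $\theta(Y)$ is precisely the minimal generating set of $\theta(M)$. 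If in addition $M$ is $k$-maximal, then so is $\theta(M)$: any $M'\in\mathcal{M}_k$ with $\theta(M)\subseteq M'$ yields $M\subseteq\theta(M')\in\mathcal{M}_k$, hence $M=\theta(M')$ and $\theta(M)=M'$. Consequently, if $Y$ is a primitive set then $\theta(Y)$ is again a primitive set, of the same cardinality.

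Finally I would apply this to $Y$, the (primitive) root of $\{x,y\}$, i.e.\ the basis of the $|Y|$-maximal submonoid containing $\{x,y\}^*$; here $|Y|=r(\{x,y\})\le 2$. From $\{x,y\}\subseteq Y^*$ I apply $\theta$ and use the first step together with $\theta(Y^*)=\theta(Y)^*$ to get $\{x,y\}=\theta(\{x,y\})\subseteq\theta(Y)^*$. By the key step $\theta(Y)$ is a primitive set, hence $\theta(Y)$ is also a primitive root of $\{x,y\}$. If $r(\{x,y\})=2$, Theorem~\ref{thm:pair} says the primitive root is unique, so $\theta(Y)=Y$; if $r(\{x,y\})\le 1$ the root is a single primitive word (or empty) and the same conclusion follows from the classical uniqueness of the primitive root of a word. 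In either case $Y=\theta(Y)$, i.e.\ the root is $\theta$-invariant. The only genuine content is the $k$-maximality-preservation step, and the point I expect to need the most care is verifying that $\theta$ respects minimal generating sets and maximality \emph{simultaneously} in the antimorphic case; everything else is bookkeeping, and the step that actually closes the argument is the uniqueness supplied by Theorem~\ref{thm:pair}.
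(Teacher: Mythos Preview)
The paper states Proposition~\ref{prop_invariant2} without proof, so there is nothing to compare against directly. Your argument is correct and is the natural one in this setting: you show that an involutive (anti)morphism $\theta$ sends primitive sets to primitive sets of the same cardinality (via preservation of minimal generating sets and of $k$-maximality), deduce that $\theta(Y)$ is again a primitive root of $\{x,y\}$, and then close with the uniqueness furnished by Theorem~\ref{thm:pair} (and by the classical primitive-root uniqueness in the rank~$1$ case). The verification that $\theta(S^2)=\theta(S)^2$ in the antimorphic case, and hence that $\theta$ carries the description~\eqref{base} to itself, is exactly the point that needs checking, and you handle it; the rest is, as you say, bookkeeping. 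This is almost certainly the argument the authors had in mind, given that the proposition appears right after Theorem~\ref{thm:pair} and is used only through its consequence in Proposition~\ref{prop:tetaprimitive}.
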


\begin{example} 
	Let $\theta$ be as in Example \ref{ex_thetaroot}. The pair $\{abcabcbac, abcbacabc\}$ is $\theta$-invariant. However, it is not a primitive pair. Its binary root is the pair $\{abc, bac\}$, which is $\theta$-invariant since $\theta(abc)=bac$.
\end{example}

\begin{remark}
	Let $\theta$ be an involutive morphism. Then $\{x, y\}$ is $\theta$-invariant if and only if $y=\theta(x)$. If $\theta$ is an involutive antimorphism, then $\{x, y\}$ is $\theta$-invariant if and only if either $y=\theta(x)$ or $x=\theta(x)$ and  $y=\theta(y).$ In the last case, $x$ and $y$ are called {\em $\theta$-palindromes}.  
\end{remark}

\begin{example}\label{reverse}
	Let  $\theta: \{a, b, c\}^* \mapsto \{a, b, c\}^*$ be the involutive antimorphism  defined by $\theta(a)=a$, $\theta(b)=b$, $\theta(c)=c$.  The pair $\{abcbbcba, abcba\}$ is $\theta$-invariant. Its binary root is $\{a, bcb\}$, which is $\theta$-invariant since composed by $\theta$-palindromes. With the same $\theta$, the pair $\{abbbbabba, abbabbbba\}$ is $\theta$-invariant and its binary root is $\{abb, bba\}$, which is $\theta$-invariant since $\theta(abb)=bba$.
\end{example}

\begin{proposition} \label{prop:tetaprimitive}
	Let $w \in A^*$ and $\theta$ be an involutive morphism of $A^*$. Then, $w$ is $\theta$-primitive if and only if the pair $\{w, \theta(w) \}$ is a primitive pair.
\end{proposition}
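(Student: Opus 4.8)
The plan is to derive both directions from the uniqueness of the primitive root of a rank‑$2$ set (Theorem~\ref{thm:pair}), exploiting the fact that an involutive morphism $\theta\neq\mathrm{id}$ is necessarily an \emph{automorphism} of $A^*$: since $\theta(\theta(a))=a$ we must have $\theta(a)\in A$, so $\theta$ is induced by an involutive permutation of the alphabet. Hence $\theta$ preserves primitivity of words, sends $k$‑maximal submonoids to $k$‑maximal submonoids, and satisfies $\theta(\{w,\theta(w)\})=\{w,\theta(w)\}$ and $\theta(\{t,\theta(t)\}^*)=\{t,\theta(t)\}^*$ for all $w,t$. I would record these facts first. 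If $w=\theta(w)$ the statement is degenerate: $\{w,\theta(w)\}=\{w\}$, and since $\theta$ then fixes the primitive root of $w$ one checks directly that $w$ is $\theta$‑primitive iff $w$ is a primitive word iff $\{w\}$ is a primitive set. So from now on I would assume $w\neq\theta(w)$, and first observe that then $\{w,\theta(w)\}$ has rank exactly $2$: if it had rank $1$, then $w=r^k$, $\theta(w)=r^\ell$ for a primitive $r$, so $\theta(r)^k=r^\ell$ with $\theta(r)$ primitive, forcing $\theta(r)=r$ and $k=\ell$ by uniqueness of primitive roots, hence $\theta(w)=w$, a contradiction. Thus Theorem~\ref{thm:pair} gives a unique primitive root $\{p,q\}$ of $\{w,\theta(w)\}$, with $p\neq q$.

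For ``$w$ $\theta$‑primitive $\Rightarrow$ $\{w,\theta(w)\}$ a primitive pair'', I would first show $\{p,q\}$ is $\theta$‑invariant: applying $\theta$ to $\{w,\theta(w)\}^*\subseteq\{p,q\}^*$ gives $\{w,\theta(w)\}^*\subseteq\theta(\{p,q\})^*$, and $\theta(\{p,q\})$ is again a primitive pair, so uniqueness of the primitive root yields $\theta(\{p,q\})=\{p,q\}$ (this is also Proposition~\ref{prop_invariant2}). Now $\theta$ cannot fix both $p$ and $q$, otherwise it would fix every word of $\{p,q\}^*$, in particular $w$, contradicting $w\neq\theta(w)$; hence $\theta$ interchanges them, i.e.\ $q=\theta(p)$. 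Since $\epsilon\neq w\in\{p,q\}^*=\{p,\theta(p)\}^*$, either $w$ is a single factor, in which case $\{w,\theta(w)\}=\{p,\theta(p)\}=\{p,q\}$ is already primitive and we are done, or $w$ is a product of at least two factors from $\{p,\theta(p)\}$; in the latter case, taking $t$ to be the first factor of $w$ (which is $p$ or $\theta(p)$, leaving the unordered pair $\{t,\theta(t)\}$ unchanged) gives $w\in t\{t,\theta(t)\}^*$ with $|w|>|t|$, contradicting $\theta$‑primitivity. So the first case occurs and $\{w,\theta(w)\}$ is a primitive pair.

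For the converse, suppose $\{w,\theta(w)\}$ is a primitive pair but $w$ is not $\theta$‑primitive, say $w\in t\{t,\theta(t)\}^*$ with $t\neq\epsilon$ and $|w|>|t|$. Then $w\in\{t,\theta(t)\}^*$, and applying $\theta$ also $\theta(w)\in\{t,\theta(t)\}^*$; hence $\{w,\theta(w)\}^*\subseteq\{t,\theta(t)\}^*\in\mathcal{M}_2$. This inclusion is \emph{proper}: every nonempty word of $\{w,\theta(w)\}^*$ has length a multiple of $|w|$, so has length $\geq|w|>|t|$, whence $t\notin\{w,\theta(w)\}^*$. This contradicts the $2$‑maximality of $\{w,\theta(w)\}^*$, so $w$ must be $\theta$‑primitive.

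I expect the forward implication to be the delicate part: the whole argument there rests on $\theta$ being an automorphism (so that $\theta(\{p,q\})$ is again a primitive pair and Theorem~\ref{thm:pair} applies to identify $\theta(\{p,q\})$ with $\{p,q\}$) together with the hypothesis $w\neq\theta(w)$ (used to exclude $\theta$ fixing $p$ and $q$ individually). The rank‑$1$ exclusion and the degenerate case $w=\theta(w)$ are routine but should be spelled out so that the invocation of Theorem~\ref{thm:pair} and the notion of primitive \emph{pair} are justified.
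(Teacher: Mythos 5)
Your proof is correct and follows essentially the same route as the paper: the converse direction is the paper's argument (with the properness of the inclusion $\{w,\theta(w)\}^*\subseteq\{t,\theta(t)\}^*$ spelled out), and the forward direction rests, as in the paper, on the uniqueness of the primitive root of a rank-$2$ set (Theorem~\ref{thm:pair}) together with the $\theta$-invariance of that root. The only difference is that you prove the invariance and the identity $q=\theta(p)$ directly (rather than citing Proposition~\ref{prop_invariant2} and the subsequent Remark) and you explicitly handle the degenerate cases ($w=\theta(w)$, rank~$1$, $w$ equal to a single root factor) that the paper's shorter proof glosses over.
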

\begin{proof}
	Let us suppose, by contradiction, that $\{w, \theta(w) \}$ is a primitive pair and $w$ is not $\theta$-primitive. Then there exists $t$ such that $w \in \{t,\theta(t)\}^*$. Hence, $\theta(w) \in \{t,\theta(t)\}^*$, so the pair $\{w, \theta(w) \}$ is not primitive.  
	Conversely, let us suppose that $w$ is $\theta$-primitive and $\{w, \theta(w)\}$ is not a primitive pair. Denote by $\{u,v\}$ its binary root. Since $\{w, \theta(w)\}$ is $\theta$-invariant, then $\{u,v\}$ is $\theta$-invariant, i.e., $v=\theta(u)$. Hence, $w \in \{u, \theta(u)\}^*$, i.e., $w$ is not $\theta$-primitive.
	\qed
\end{proof}

From Theorem \ref{thm:pair} and Proposition \ref{prop:tetaprimitive} we derive Theorem \ref{kari} when $\theta$ is an involutive morphism.

Now, let us consider the case of antimorphisms. Reasoning analogously as we did in the proof of Proposition~\ref{prop:tetaprimitive}, we can prove the following result.

\begin{proposition}
	Let $w \in A^*$ and $\theta$ an involutive antimorphism of $A^*$. If the pair $\{w, \theta(w) \}$ is a primitive pair, then $w$ is $\theta$-primitive.
\end{proposition}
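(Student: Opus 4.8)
The plan is to mirror the proof of Proposition~\ref{prop:tetaprimitive}, but to use only the one implication that survives in the antimorphism case. First I would argue by contradiction: assume $\{w, \theta(w)\}$ is a primitive pair but $w$ is not $\theta$-primitive. By definition of $\theta$-primitivity, there is a nonempty word $t$ with $|t|<|w|$ such that $w$ is a $\theta$-power of $t$, i.e. $w \in t\{t,\theta(t)\}^* \subseteq \{t,\theta(t)\}^*$.

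The key point is then to deduce that $\theta(w)$ also lies in $\{t,\theta(t)\}^*$. Write $w = t_1 t_2 \cdots t_m$ with each $t_i \in \{t,\theta(t)\}$. Since $\theta$ is an antimorphism, $\theta(w) = \theta(t_m)\theta(t_{m-1})\cdots\theta(t_1)$, and since $\theta$ is an involution, $\theta(t)$ maps to $t$ and $t$ maps to $\theta(t)$; hence each $\theta(t_i)$ is again an element of $\{t,\theta(t)\}$. Therefore $\theta(w) \in \{t,\theta(t)\}^*$ as well. It follows that $\{w,\theta(w)\} \subseteq \{t,\theta(t)\}^*$, with $|t|<|w|$ so that $t \notin \{w,\theta(w)\}$, which means $\{w,\theta(w)\}$ is not the basis of a $2$-maximal submonoid unless it happens to equal the basis of $\{t,\theta(t)\}^*$ — but the latter has two generators each of length at most $|\theta(t)|=|t|<|w|$, so it cannot contain a word of length $|w|$ as a generator. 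Hence $\{w,\theta(w)\}$ is not a primitive pair, contradicting the hypothesis.

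A small subtlety to address is the degenerate case $\theta(w)=w$ (possible when $w$ is a $\theta$-palindrome), so that $\{w,\theta(w)\}$ is actually a singleton; in that situation ``primitive pair'' should be read as the statement that $\{w\}$ is primitive, i.e. $w$ is a primitive word, and the same argument shows $w \in \{t,\theta(t)\}^*$ with strictly shorter generators, forcing $w$ to be non-primitive — again a contradiction. I would state this explicitly so the argument is uniform.

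I do not expect any real obstacle here: the entire content is the observation that $\{t,\theta(t)\}^*$ is closed under $\theta$ when $\theta$ is an involutive antimorphism, which is exactly the same closure fact used for morphisms in Proposition~\ref{prop:tetaprimitive}; the reason the converse implication fails (and hence is not claimed) is that the binary root $\{u,v\}$ of a $\theta$-invariant pair need not satisfy $v=\theta(u)$ — it could instead consist of two $\theta$-palindromes — so one cannot run the reverse direction of Proposition~\ref{prop:tetaprimitive} to conclude $\theta$-primitivity from primitiveness of the pair. It may be worth remarking on this asymmetry right after the proof, since it is precisely what the subsequent ``new property of $\theta$-primitive words'' exploits.
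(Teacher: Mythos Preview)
Your proposal is correct and follows exactly the approach the paper intends: the paper does not write out a separate proof for the antimorphism case but simply says ``reasoning analogously as we did in the proof of Proposition~\ref{prop:tetaprimitive}'', and your argument is precisely that analogy---use $w\in\{t,\theta(t)\}^*$, observe that $\{t,\theta(t)\}^*$ is closed under the involutive antimorphism $\theta$ (now via the reversed product), conclude $\theta(w)\in\{t,\theta(t)\}^*$, and contradict primitivity of the pair. Your extra care with the length argument and the degenerate $\theta(w)=w$ case is sound and goes slightly beyond what the paper spells out, but the core idea is identical.
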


The converse does not hold in general, as the following example shows.

\begin{example}
	Let $\theta$ be the antimorphic involution of Example \ref{reverse}. The word $w=abbaabbacbc$ is $\theta$-primitive, whereas the pair $\{w, \theta(w)\}= \{abbaabbacbc, cbcabbaabba\}$ is not a primitive pair, since its binary root is the pair $\{abba, cbc\}$.
\end{example}

Finally, we can state the following proposition, which provides a factorization property of $\theta$-primitive words. 

\begin{proposition}
	Let $w \in A^*$ and $\theta$ an involutive antimorphism. If $w$ is $\theta$-primitive and $\{w, \theta(w)\}$ is not a primitive pair, then there exist two $\theta$-palindromes $p$ and $q$ such that $w \in \{p, q\}^*$. 
\end{proposition}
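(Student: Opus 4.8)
The plan is to mirror the proof of Proposition~\ref{prop:tetaprimitive}, replacing the morphic description of $\theta$-invariant pairs by the antimorphic one from the Remark above; the single substantive new point is that the alternative ``$v=\theta(u)$'' must now be excluded by hand.

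First I would record that a $\theta$-primitive word is in particular primitive (if $w=t^n$ with $n\ge 2$, then $w\in t\{t\}^*\subseteq t\{t,\theta(t)\}^*$ is a $\theta$-power of a strictly shorter word). The case $w=\theta(w)$ is then immediate, since $w$ is itself a $\theta$-palindrome and one may take $p=q=w$; so assume $w\ne\theta(w)$. I claim $w$ and $\theta(w)$ do not commute: if they did, both would be powers of a common primitive word $z$, say $w=z^i$ and $\theta(w)=z^j$; applying the antimorphism $\theta$ to $w=z^i$ gives $\theta(w)=\theta(z)^i$, and since $|\theta(z)|=|z|$ we get $i=j$, hence $\theta(z)^i=z^i$, so $\theta(z)=z$ (equal powers cancel in a free monoid) and therefore $\theta(w)=w$, a contradiction. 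Consequently $\{w,\theta(w)\}$ has rank $2$, so by Theorem~\ref{thm:pair} it has a unique primitive root, which --- exactly as in the situation of Proposition~\ref{prop:tetaprimitive}, via the consequence of Theorem~\ref{th_case_2} --- is a primitive pair $\{u,v\}$; moreover $w\notin\{u,v\}$, since $w=u$ would force $\theta(w)=\theta(u)$ to be the other element of the pair, making $\{w,\theta(w)\}=\{u,v\}$ a primitive pair, contrary to hypothesis.

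Next I would note that $\{w,\theta(w)\}$ is $\theta$-invariant, because $\theta(\{w,\theta(w)\})=\{\theta(w),w\}$ by involutivity, so by Proposition~\ref{prop_invariant2} its binary root $\{u,v\}$ is $\theta$-invariant as well. By the Remark characterising $\theta$-invariant pairs for an antimorphism, either (i) $v=\theta(u)$, or (ii) $u$ and $v$ are both $\theta$-palindromes; in case (ii) one takes $p=u$ and $q=v$ and is done. The main obstacle is ruling out case (i), which I expect to handle as follows: if $v=\theta(u)$ then $w\in\{u,\theta(u)\}^*$, and since $w\ne u$ and $w\ne\theta(u)$ the word $w$ is a product of at least two factors from $\{u,\theta(u)\}$; writing $w=ts$ with $t\in\{u,\theta(u)\}$ the first factor and $s$ a nonempty element of $\{u,\theta(u)\}^*=\{t,\theta(t)\}^*$, we obtain $w\in t\{t,\theta(t)\}^*$ with $|w|\ge 2|t|>|t|$, so $w$ is a $\theta$-power of the strictly shorter word $t$ --- contradicting the $\theta$-primitivity of $w$. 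Hence only case (ii) survives, which is the claim. The delicate points are the free-monoid cancellations and the length bookkeeping for antimorphisms ($\theta(z^n)=\theta(z)^n$ and $|\theta(u)|=|u|$), together with the verification that $w$ is neither $u$ nor $\theta(u)$, which rests on $\{w,\theta(w)\}$ being non-primitive while $\{u,v\}$ is primitive.
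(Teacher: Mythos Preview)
Your proof follows the same route as the paper's: take the primitive root $\{u,v\}$ of $\{w,\theta(w)\}$, invoke Proposition~\ref{prop_invariant2} to obtain that $\{u,v\}$ is $\theta$-invariant, use the Remark to split into the cases $v=\theta(u)$ versus both $\theta$-palindromes, and exclude the former via $\theta$-primitivity of $w$. You supply details the paper leaves implicit (the edge case $w=\theta(w)$, the verification that $\{w,\theta(w)\}$ has rank~$2$, and the explicit length argument ruling out $v=\theta(u)$).

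One small ordering issue: your justification for $w\notin\{u,v\}$ (``$w=u$ would force $\theta(w)=\theta(u)$ to be the other element of the pair'') already presupposes $\theta(u)\in\{u,v\}$, i.e., the $\theta$-invariance of $\{u,v\}$, which you only establish in the next paragraph via Proposition~\ref{prop_invariant2}. Either move this step after that invocation, or---more economically---argue it inside case~(i), where $v=\theta(u)$ makes the implication immediate (if $w=u$ then $\theta(w)=v$, so $\{w,\theta(w)\}=\{u,v\}$ would be primitive). The paper compresses all of this into the single clause ``$v\neq\theta(u)$ since $w$ is $\theta$-primitive''.
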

\begin{proof}
	Suppose that $\{w, \theta(w)\}$ is not a primitive pair and denote by $\{u, v\}$ its binary root. Since $\{w, \theta(w)\}$ is $\theta$-invariant,  then so is $\{u, v\}$ by Proposition \ref{prop_invariant2}, and $v \neq \theta(u)$ since $w$ is $\theta$-primitive. Then, $u=\theta(u)$ and $v=\theta(v)$ are $\theta$-palindromes.
	\qed
\end{proof}

Finally, we point out that our Theorem \ref{thm:cube} can be viewed as a generalization of the following result of Kari, Masson and Seki \cite{Kari11}:

\begin{theorem}[Theorem 12 of \cite{Kari11}]
	Let $x$ be a nonempty $\theta$-primitive word. Then neither $x\theta(x)$ nor $\theta(x)x$ occurs internally in a word of $\{x,\theta(x)\}^3$. 
\end{theorem}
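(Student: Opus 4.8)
The plan is to deduce the statement from Theorem~\ref{thm:cube} by applying it to the pair $\{x,\theta(x)\}$, handling separately the cases in which this pair is not primitive. First I would dispose of the degenerate case $x=\theta(x)$: then $x\theta(x)=\theta(x)x=x^{2}$ and $\{x,\theta(x)\}^{3}=\{x^{3}\}$, and since a $\theta$-primitive word is primitive (if $x=v^{n}$ with $n\ge 2$ then $x\in v\{v,\theta(v)\}^{*}$, so $x$ is a $\theta$-power of $v$), the assertion is exactly the classical fact recalled in Section~\ref{sec:prem} that a primitive word has no internal occurrence in its square. So from now on assume $x\neq\theta(x)$.

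If $\{x,\theta(x)\}$ is a primitive pair, I put $y=\theta(x)$: since $x\theta(x)=xy$, $\theta(x)x=yx$ and $\{x,\theta(x)\}^{3}=\{x,y\}^{3}$, Theorem~\ref{thm:cube} applied to the primitive pair $\{x,y\}$ gives immediately that neither $x\theta(x)$ nor $\theta(x)x$ occurs internally in a word of $\{x,\theta(x)\}^{3}$. By Proposition~\ref{prop:tetaprimitive}, when $\theta$ is an involutive \emph{morphism} the pair $\{x,\theta(x)\}$ is automatically primitive once $x$ is $\theta$-primitive, so for morphisms the proof is complete here; this is the precise sense in which Theorem~\ref{thm:cube} contains the Kari--Masson--Seki theorem.

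It remains to treat $\theta$ an involutive \emph{antimorphism} with $\{x,\theta(x)\}$ not primitive. Here I would invoke the structural result established just above: the binary root of $\{x,\theta(x)\}$ is a pair $\{p,q\}$ of $\theta$-palindromes with $x\in\{p,q\}^{*}$. Letting $\pi$ be the isomorphism from $\{P,Q\}^{*}$ onto $\{p,q\}^{*}$ with $P\mapsto p$, $Q\mapsto q$, we have $x=\pi(\alpha)$ for some $\alpha\in\{P,Q\}^{+}$ and, because $\theta(p)=p$, $\theta(q)=q$ and $\theta$ reverses order, $\theta(x)=\pi(\widetilde{\alpha})$, where $\widetilde{\alpha}$ is the mirror image of $\alpha$; moreover $\alpha$ is primitive for the mirror antimorphism of $\{P,Q\}^{*}$, for otherwise a nontrivial $\{\gamma,\widetilde{\gamma}\}$-factorization of $\alpha$ would lift through $\pi$ to a $\theta$-power factorization of $x$ with a shorter root, contradicting $\theta$-primitivity. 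The target then becomes: $\pi(\alpha\widetilde{\alpha})$ and $\pi(\widetilde{\alpha}\alpha)$ have no internal occurrence in a word of $\{x,\theta(x)\}^{3}=\pi(\{\alpha,\widetilde{\alpha}\}^{3})$. I would argue by induction on $|x|$: if $p$ and $q$ are not both single letters then $|\alpha|<|x|$ and, after a synchronization step turning an internal occurrence over $A$ into an internal occurrence over $\{P,Q\}$, the inductive hypothesis applies; if $p$ and $q$ are both letters, then $\theta$ restricts to the mirror antimorphism of a two-letter alphabet and we reach the base case, which can be settled directly by a Fine--Wilf argument (or simply quoted from~\cite{Kari11}).

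I expect the genuine obstacle to lie exactly in this last case, and more precisely in the synchronization step: the equivalence between $\theta$-primitivity and primitivity of $\{w,\theta(w)\}$ breaks down for antimorphisms, so Theorem~\ref{thm:cube} no longer applies to the pair directly, and — since a primitive pair need not be comma-free — an occurrence of $\pi(\alpha\widetilde{\alpha})$ inside a word of $\{x,\theta(x)\}^{3}$ need not be aligned with the $\{p,q\}$-factorization; forcing this alignment requires exploiting the palindromic structure of $p$ and $q$ together with the length restrictions imposed by an \emph{internal} occurrence. For involutive morphisms, by contrast, the whole statement is an immediate corollary of Theorem~\ref{thm:cube} and Proposition~\ref{prop:tetaprimitive}.
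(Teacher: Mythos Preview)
The paper does not prove this theorem; it is quoted from \cite{Kari11} without proof, and the surrounding text merely remarks that Theorem~\ref{thm:cube} ``can be viewed as a generalization'' of it. Your derivation in the morphic case --- combining Proposition~\ref{prop:tetaprimitive} (so that $\{x,\theta(x)\}$ is a primitive pair) with Theorem~\ref{thm:cube} applied to $y=\theta(x)$ --- is precisely the connection the paper is alluding to, and it is complete and correct; the degenerate case $x=\theta(x)$ you handle is also fine.

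For the antimorphic case there is nothing in the paper to compare against. The paper does not claim that Theorem~\ref{thm:cube} yields the Kari--Masson--Seki result for antimorphisms, and it explicitly exhibits (the word $w=abbaabbacbc$) a $\theta$-primitive $w$ with $\{w,\theta(w)\}$ not a primitive pair, so the direct route is blocked exactly as you say. Your inductive reduction through the $\theta$-palindromic binary root is a plausible strategy, and you have correctly isolated the genuine obstacle in the synchronization step (an internal occurrence in $A^{*}$ need not respect the $\{p,q\}$-block boundaries). But this goes beyond anything the paper attempts: the antimorphic case is simply left to the citation \cite{Kari11}, so your sketch here is additional work rather than a divergence from a proof in the paper.
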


\section{Acknowledgments}

We thank \v{S}t\v{e}p\'an Holub for useful discussions and in particular for suggesting us the important Example~\ref{controes}.

\end{document}